\documentclass[preprint]{article}
\usepackage{staix_2026}

\usepackage[utf8]{inputenc}
\usepackage[T1]{fontenc}
\usepackage{hyperref}
\usepackage{url}
\usepackage{booktabs}
\usepackage{amsfonts,amsmath, amsthm}
\usepackage{nicefrac}
\usepackage{microtype}
\usepackage{xcolor}
\usepackage{graphicx}
\usepackage{xspace} 
\newcommand{\Wahkon}{\textsc{Wahkon}\xspace}

\title{Wahkon: A Statistically Principled Deep
RKHS Superposition Network}
\author{%
  First Author \\
  Affiliation \\
  Address \\
  \texttt{first.author@example.edu}
  \And
  Second Author \\
  Affiliation \\
  Address \\
  \texttt{second.author@example.edu}
}

 \author{
 Yongkai Chen\\
     Department of Statistics, Harvard University\\
     \texttt{yongkaichen@fas.harvard.edu}
     \And
     Wenxuan Zhong\\
     Department of Statistics,  University of Georgia\\
     \texttt{wenxuan@uga.edu}
     \And
     Ping Ma\\
     Department of Statistics,  University of Georgia\\
     \texttt{pingma@uga.edu}
 }
\date{} 

\newtheorem{theorem}{Theorem}[section]

\newtheorem{remark}{Remark}

\newtheorem{proposition}[theorem]{Proposition}
\newtheorem{assumption}[theorem]{Assumption}
 
\newcommand{\bfm}[1]{\ensuremath{\mathbf{#1}}}

     \def\EE{\mathbb{E}}

\def\bx{\bfm x}

 \def\cH{{\cal  H}}

 \def\cK{{\cal  K}}

\def\Var{\mathrm{Var}}

\def\a\cos{\mathrm{arc\cos}}

\begin{document}

\maketitle

\begin{abstract}
Deep learning excels at prediction but often lacks finite-sample guarantees and calibrated uncertainty; RKHS (Reproducing Kernel Hilbert Space)-based methods provide those guarantees but struggle to adapt in high dimensions. We propose Wahkon, a deep RKHS superposition network that unifies
Kolmogorov's superposition principle with RKHS regularization in the
smoothing-spline tradition of Wahba.
This yields a finite-dimensional \emph{deep representer theorem} that makes training tractable and provides explicit layerwise complexity control. We show the penalized estimator is exactly the MAP (maximum a posteriori) estimate under a hierarchical Gaussian-process prior, extending the spline/GP duality to deep compositions.
Using metric-entropy arguments, we establish minimax-optimal convergence rates under mild smoothness and clarify how depth and width trade off with regularity.
Empirically, Wahkon outperforms multilayer perceptrons, Neural Tangent Kernels, and Kolmogorov--Arnold Networks across simulation benchmarks and a single-cell CITE-seq study.
By unifying Kolmogorov's superposition principle with RKHS regularization, Wahkon delivers accuracy, interpretability, and statistical rigor in a single framework.
\end{abstract}

\section{Introduction}\label{sec:introduction}

Modern neural networks have transformed prediction across science and engineering \citep{Goodfellow-et-al-2016deep}, but their use for formal statistical inference remains challenging. Standard architectures typically rely on predetermined activation functions, such as ReLU or sigmoid, together with regularization and stabilization devices such as weight decay, dropout, normalization, and early stopping. While these tools are highly effective for prediction, they do not by themselves provide the function-space complexity control, uncertainty quantification, or interpretability that are central to many statistical applications. In contrast, classical nonparametric methods grounded in reproducing kernel Hilbert spaces (RKHS) offer explicit smoothness control, optimal convergence rates, and a close Bayesian interpretation through Gaussian processes (GPs) \citep{wahba1990spline,gu:13}. Neural tangent kernel (NTK) theory extends this kernel perspective to overparameterized neural networks by showing that, in suitable infinite-width and lazy-training regimes, gradient descent behaves like kernel regression with a deterministic kernel determined by the architecture and initialization \citep{jacot2018neural,lee2019wide}. However, classical kernel methods and NTK-based methods typically rely on kernels that are fixed before training or only weakly adaptive. Such kernels can impose a geometry and smoothness structure that may be poorly aligned with high-dimensional, heterogeneous data. Bridging these paradigms calls for an approach that preserves deep learning's adaptive representation power while restoring RKHS-style complexity control and principled statistical inference.

We introduce Wahkon, a method inspired by Kolmogorov’s superposition theorem \citep{kolmogorov1957representation,arnold1963representation} and the recent Kolmogorov–Arnold network (KAN) paradigm, which combines deep neural architectures with learnable activation functions. 
Wahkon adopts a standard deep compositional structure but replaces fixed activations with \emph{learned univariate} link functions along every edge of the network, each constrained to lie in an RKHS in the sense of Wahba’s smoothing-spline framework. 
The deep compositional structure enables Wahkon with high representational power and the RKHS-based regularization at the level of individual links provides a hierarchical smoothness control while remaining interpretable.

The article makes three main contributions. First, we establish a \emph{deep representer theorem} for Wahkon, despite the infinite-dimensional function space for each link function and the deep compositional architecture, the estimator admits a finite kernel expansion at every layer. This extends KAN-style architectures to an RKHS setting, replacing fixed B-spline bases and heuristic penalties with principled RKHS regularization, which in turn yields improved statistical convergence guarantees.
Second, we show that the penalized Wahkon estimator coincides with the \emph{Maximum a posteriori} with a deep hierarchical Gaussian-process prior on the link functions. This extends the classical spline/GP duality \citep{wahba1978improper,rasmussen2006gaussian} to the deep network structure, which provides insights into the success of Wahkon.
Third, we derive \emph{minimax-optimal} convergence rates under mild smoothness assumptions and make explicit how depth, width, and smoothness of link functions interact, providing principled guidance for hyperparameter selection. 
Empirically, Wahkon outperforms multilayer perceptrons (MLPs), NTK, and KANs in synthetic benchmarks. 
 In a CITE-seq application, Wahkon accurately predicts surface protein abundance from RNA \citep{stoeckius2017large,Stuart2019}.
Collectively, these results demonstrate that Wahkon combines the representational power of deep architectures with the rigor and stability of RKHS-based regularization, yielding a flexible yet statistically principled framework for high-dimensional prediction and inference.

\vspace{-8 pt}
\section{Background and Motivation}\label{sec:background}


\paragraph{Kolmogorov's Superposition Principle}

\begin{theorem}[\cite{kolmogorov1957representation}]
For any continuous $f:[0,1]^D\!\to\!\mathbb{R}$, there exist continuous univariate functions $\{\phi_{qk}\}$ and $\{\psi_q\}$ such that
$f(\bx)=\sum_{q=1}^{2D+1}\psi_q\!\left(\sum_{k=1}^D \phi_{qk}(x_k)\right)$ .
\end{theorem}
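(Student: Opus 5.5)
The plan is the classical Kolmogorov--Lorentz--Sprecher argument: fix the inner functions once and for all, independently of $f$, and then build the outer functions for a given $f$ by an iterative approximation that converges geometrically. First I would construct $2D+1$ continuous, strictly increasing inner maps $\Phi_q(\bx)=\sum_{k=1}^D \phi_{qk}(x_k)$, $q=1,\dots,2D+1$, with the following separation property. For every mesh level $m$ there are families of closed ``bricks'' (products of small intervals) $\{S^{q,m}_{j}\}$ such that:
\begin{itemize}
\item for each fixed $q$, the bricks cover $[0,1]^D$ except for thin gaps;
\item for each $q$, the images $\Phi_q(S^{q,m}_j)$ are pairwise disjoint intervals in $\mathbb{R}$;
\item every point $\bx$ lies in bricks of at least $D+1$ of the $2D+1$ families.
\end{itemize}

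To achieve the third property, I would shift the one-dimensional grid for family $q$ by $q\varepsilon_m$. A point can then fall in a gap for at most one shift per coordinate, hence for at most $D$ families. For the second property, I would choose the $\phi_{qk}$ on grid intervals so that the weighted sums separate. The standard route is to take $\phi_{qk}=\lambda_k\phi_q$ with rationally independent $\lambda_k>0$, and to construct $\phi_q$ inductively on refining grids, perturbing at each level so that finitely many linear combinations stay distinct. This is a Baire-category or explicit-perturbation step, and I expect it to be the main technical obstacle, since separation must hold simultaneously at all levels $m$ while keeping $\phi_q$ continuous and monotone.

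Second, I would prove the key approximation lemma. Given any continuous $g$ with $\|g\|_\infty\le M$, there are continuous $\chi_q$ with $\|\chi_q\|_\infty\le M/(D+1)$ such that
\[
\Big\|g-\sum_q \chi_q\circ\Phi_q\Big\|_\infty\le \theta M,\qquad \theta=\frac{2D+1}{2D+2}+\delta<1 .
\]
The construction uses uniform continuity: choose $m$ so that $g$ oscillates less than $\delta M$ on each brick. On the interval $\Phi_q(S^{q,m}_j)$, set $\chi_q\equiv g(\text{center of }S^{q,m}_j)/(D+1)$, and interpolate linearly in the gaps between these disjoint image intervals. At a point $\bx$, at least $D+1$ families contribute approximately $g(\bx)/(D+1)$ each. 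The at most $D$ remaining families contribute at most $M/(D+1)$ each. Summing gives an error of at most $\tfrac{D}{D+1}M+\delta M$.

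Third, I would iterate. Set $g_0=f$ and $g_{r+1}=g_r-\sum_q\chi_q^{(r)}\circ\Phi_q$. Then $\|g_r\|\le\theta^r\|f\|$, so the series $\psi_q=\sum_r\chi_q^{(r)}$ converges uniformly, each term being bounded by $\theta^r\|f\|/(D+1)$. Each $\psi_q$ is therefore continuous, and $f=\sum_{q=1}^{2D+1}\psi_q(\Phi_q(\bx))$, which is exactly the claimed representation. Because the inner functions come from the first step and do not depend on $f$, the difficulty of the whole argument is concentrated in that geometric construction of the $\phi_{qk}$.
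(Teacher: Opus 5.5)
The paper gives no proof of this statement. It is quoted from Kolmogorov's 1957 paper as background, so there is no in-paper argument to compare against. Your outline is the classical Kolmogorov scheme in its Lorentz--Sprecher form: inner functions fixed independently of $f$, a contraction lemma, and geometric iteration. Your Steps 2 and 3 are sound.

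One bookkeeping fix in the lemma. When $p>D+1$ families are good at $\bx$, their contributions total about $\frac{p}{D+1}g(\bx)$, which overshoots $g(\bx)$. You must pair the overshoot $\frac{p-D-1}{D+1}|g(\bx)|$ with the $2D+1-p$ bad families to reach the bound $\frac{D}{D+1}M$. The oscillation term is then $\frac{p}{D+1}\delta M\le\frac{2D+1}{D+1}\delta M$, not $\delta M$. This does not affect $\theta<1$.

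The genuine gap is Step 1. It carries essentially all of the theorem's content, and you only gesture at it. Keeping finitely many linear combinations distinct is not what Step 2 uses. Step 2 needs the image \emph{intervals} $\Phi_q(S^{q,m}_j)$ to be pairwise disjoint. That requires a margin between corner values that exceeds the rise of $\Phi_q$ across each brick. The margin must also survive every later refinement while $\phi_q$ stays continuous and strictly increasing. Your sketch does not say how any of this is controlled.

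It also helps to notice that you never need separation at \emph{every} level. Step 2 only uses some level fine enough for the given $g$, so infinitely many good levels suffice, and a short Baire argument then closes the step.

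\begin{itemize}
\item \emph{Setup.} Fix the shifted grids with mesh $h_m\to 0$, and fix $\lambda_1,\ldots,\lambda_D>0$ linearly independent over $\mathbb{Q}$. Let $\mathcal{M}$ be the complete metric space of $(2D+1)$-tuples of continuous nondecreasing functions on $[0,1]$ under the sup norm.
\item \emph{The sets.} Let $U_N$ be the set of tuples for which some level $m\ge N$ has, for every $q$, pairwise disjoint images of all family-$q$ bricks. For rationals $a<b$, let $V_{a,b}$ be the set of tuples with $\phi_q(a)<\phi_q(b)$ for all $q$.
\item \emph{Openness.} Both sets are open. A brick's image is the compact interval whose endpoints are $\sum_k\lambda_k\phi_q$ evaluated at the brick's lower and upper corners. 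These endpoints move continuously with $\phi_q$.
\item \emph{Density of $U_N$.} Given a tuple and $\epsilon>0$, choose $m\ge N$ so that every $\phi_q$ oscillates by less than $\epsilon/2$ on intervals of length $h_m$. Replace $\phi_q$ by the function that is linear on the gaps and constant on each level-$m$ interval $I$ of family $q$. Take the constant to be a rational $r_I$, with the $r_I$ strictly increasing from left to right and each within $\epsilon/2$ of $\phi_q$ on $I$. The new function is $\epsilon$-close to $\phi_q$. It maps each brick $\prod_k I_k$ to the single point $\sum_k\lambda_k r_{I_k}$, and these points are pairwise distinct by rational independence.
\item \emph{Density of $V_{a,b}$.} Add $\epsilon x$ to each $\phi_q$.
\end{itemize}

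Any tuple in the residual set $\bigcap_N U_N\cap\bigcap_{a<b}V_{a,b}$ gives continuous, strictly increasing $\phi_q$ with disjoint brick images at arbitrarily fine levels. That is all your Steps 2 and 3 consume.

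The explicit route you had in mind, plateau functions refined inductively under a summable perturbation budget, also works. It additionally requires each level-$(m+1)$ interval to be shorter than a level-$m$ gap, since otherwise re-flattening erases a level-$m$ rise, and it needs a careful margin schedule. Either way, this argument has to be supplied before the proof is complete.
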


While nonconstructive, this theorem implies that any continuous multivariate function can be approximated by compositions of univariate functions. However, practical implementations prioritize computational efficiency over exact representation, diverging from the theorem's idealized structure.

\paragraph{Multilayer Perceptron Networks}

Multilayer perceptrons (MLPs) approximate a target by composing affine maps with a fixed nonlinearity, typically ReLU \citep{Goodfellow-et-al-2016deep}. They are universal approximators \citep[e.g.,][]{barron1993universal} and highly effective in practice, but fixed activations limit adaptivity and formal uncertainty.

\paragraph{Kolmogorov--Arnold Networks}

Kolmogorov--Arnold Networks (KANs) \citep{liu2024kan} replace fixed activations with learned univariate functions and summation across inputs. This brings depth efficiency and interpretability but lacks explicit statistical regularization and finite-sample guarantees.



\paragraph{Other related works}
Spline ANOVA and RKHS methods provide smoothness control, representer structure, and Bayesian interpretation \citep{wahba1990spline,gu:13,wahba1978improper}. Deep kernel learning \citep {wilson2016deep} bring kernels to multilayer neural network settings.
Neural Tangent Kernel \citep{jacot2018neural} and its infinite-width GP correspondence \citep{lee2018deep,lee2019wide} connect deep networks to GPs but use a lazy-training regime without feature learning.
A growing body of theoretical work studies unfixed link functions.
 \citet{horowitz2007rate} establishes rate-optimal estimation for generalized additive models with unknown link functions. 
  For fully connected deep networks, \citet{schmidt2020nonparametric} derive minimax-optimal rates over H\"older-smooth link functions and show that depth enables adaptation to compositional structure. 
  \citet{bauer2019deep} proves that deep networks overcome the curse of dimensionality when the it admits a compositional representation.
 

\section{The Wahkon Network}\label{sec:method}


\subsection{Network Architecture for Multivariate Regression}
\label{subsec:wahkon_architecture}

Consider the nonparametric regression model
$Y = \eta(\mathbf{X})+\epsilon$, $\mathbf{X}\in\mathbb{R}^D$, $\mathbb{E}(\epsilon\mid \mathbf{X})=0$, so that $\eta(\mathbf{x})=\mathbb{E}(Y\mid \mathbf{X}=\mathbf{x})$ is the regression function of interest. 
A Wahkon network approximates $\eta$ by a composition of $L$ layers. Its architecture is specified by the sequence of layer widths $(D_0,D_1,\ldots,D_L)$, where $D_0=D$ is the input dimension and $D_L=1$ for scalar-response regression. For a given input $\mathbf{x}\in\mathbb{R}^D$, define the layer outputs recursively. Set
$\mathbf{x}^{(0)}=\mathbf{x}$, 
$\mathbf{x}^{(l)}=(x^{(l)}_1,\ldots,x^{(l)}_{D_l})^\top\in\mathbb{R}^{D_l}$, $l=1,\ldots,L$. The $l$th layer is a map
$\Phi^{(l)}:\mathbb{R}^{D_{l-1}}\to \mathbb{R}^{D_l}$,
$\mathbf{x}^{(l)}=\Phi^{(l)}(\mathbf{x}^{(l-1)})$. 
Each coordinate of $\Phi^{(l)}$ is formed by summing univariate transformations of the coordinates from the previous layer:
\begin{equation}\label{eq:wahkon-layer}
x^{(l)}_{j}
=
\sum_{k=1}^{D_{l-1}}
\phi^{(l)}_{j k}\!\left(x^{(l-1)}_{k}\right),
\qquad 
j=1,\ldots,D_l,\quad l=1,\ldots,L.
\end{equation}
Here $\phi^{(l)}_{jk}:\mathbb{R}\to\mathbb{R}$ is the univariate link function from the $k$th unit in layer $l-1$ to the $j$th unit in layer $l$. Unlike conventional neural networks, where nonlinearities are usually fixed in advance, Wahkon learns these link functions from data. We assume each link belongs to a common reproducing kernel Hilbert space $\mathcal{H}_{\mathcal{K}}$ with reproducing kernel $\mathcal{K}$. The resulting network predictor is the scalar output of the final layer:
\begin{equation}\label{eq:wahkon-arch}
\eta_{\mathrm{Wahkon}}(\mathbf{x})
=
x^{(L)}_1
=
\Phi^{(L)}\circ \Phi^{(L-1)}\circ\cdots\circ \Phi^{(1)}(\mathbf{x}).
\end{equation}



\subsection{Penalized Estimation}
\label{subsec:penalized_estimation}

Let $\{(\mathbf{x}_i,y_i)\}_{i=1}^n$ be independent observations with
$\mathbf{x}_i\in\mathbb{R}^D$ and $y_i\in\mathbb{R}$. For a given collection of
link functions, let
$\mathbf{x}_i^{(l)} =(x^{(l)}_{i1},\ldots,x^{(l)}_{iD_l})^\top \in\mathbb{R}^{D_l}$,
 $l=0,\ldots,L$, denote the layer-$l$ output for the $i$th observation with
$\mathbf{x}_i^{(0)}=\mathbf{x}_i$. The fitted value for observation $i$ is
$x^{(L)}_{i1}=\eta_{\text{Wahkon}}(\mathbf{x}_i)$. We estimate the link functions by minimizing a penalized least-squares,
\begin{equation}\label{eq:loss}
\mathcal{L}(\eta_{\text{Wahkon}})
=
\sum_{i=1}^{n}
\bigl[y_i-\eta_{\text{Wahkon}}(\mathbf{x}_i)\bigr]^2
+
n\,J(\eta_{\text{Wahkon}}),
\end{equation}
where the penalty is
\begin{equation}\label{eq:penalty}
J(\eta_{\text{Wahkon}})
=
\sum_{l=1}^{L}\lambda_l
\sum_{j=1}^{D_l}\sum_{k=1}^{D_{l-1}}
\bigl\|\phi^{(l)}_{jk}\bigr\|_{\mathcal{H}_{\mathcal K}}^2,
\qquad \lambda_l>0.
\end{equation}
Here $\lambda_l$ is the regularization parameter for layer $l$. Larger
$\lambda_l$ enforces smoother and smaller link functions, while a smaller $\lambda_l$ allows more flexible nonlinear transformations. Unlike
weight decay in a standard MLP, this penalty is imposed  on the  functions
$\phi^{(l)}_{jk}$ through their RKHS norms.

\subsection{Deep Representer Theorem}
\label{subsec:representer}

The optimization problem in \eqref{eq:loss} is infinite-dimensional because each
link function $\phi^{(l)}_{jk}$ ranges over an RKHS. The following result shows
that the minimizer nevertheless has a finite representation. The kernel centers
at layer $l$ are the fitted inputs entering that layer, namely
$\{x^{(l-1)}_{ik}:i=1,\ldots,n\}$.

\begin{theorem}[Deep representer theorem]\label{thm:deep-representer}
Suppose each link function $\phi^{(l)}_{jk}$ belongs to
$\mathcal{H}_{\mathcal K}$ and that a minimizer of \eqref{eq:loss} exists. Then,
at any minimizer,
\[
\phi^{(l)}_{jk}(t)
=
\sum_{i=1}^{n}
c^{(l)}_{ijk}\,
\mathcal K\!\left(x^{(l-1)}_{ik},t\right),
\qquad
j=1,\ldots,D_l,\quad k=1,\ldots,D_{l-1},\quad l=1,\ldots,L,
\]
for some coefficients $c^{(l)}_{ijk}\in\mathbb{R}$. Therefore, the fitted Wahkon network is fully determined by the finite collection of coefficient
tensors $\mathcal C = \{\mathbf C^{(l)}:l=1,\ldots,L\}$,  
$\mathbf C^{(l)}\in\mathbb{R}^{n\times D_l\times D_{l-1}}$, whose $(i,j,k)$th entry is $c^{(l)}_{ijk}$.
\end{theorem}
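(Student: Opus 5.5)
The argument is the classical orthogonal-projection proof of the representer theorem, applied one link function at a time while every other link is held fixed at its optimal value. Fix a minimizer $\{\hat\phi^{(l)}_{jk}\}$ of \eqref{eq:loss}. From it, compute the fitted layer outputs $\hat{\mathbf{x}}_i^{(l)}$ for $i=1,\ldots,n$ and $l=0,\ldots,L$. These determine the kernel centers $x^{(l-1)}_{ik}$ that appear in Theorem~\ref{thm:deep-representer}. Then choose a single edge $(l,j,k)$ and define the finite-dimensional subspace
\[
\mathcal{S}^{(l)}_{k}=\mathrm{span}\{\mathcal K(\hat x^{(l-1)}_{ik},\cdot):i=1,\ldots,n\}\subset\mathcal H_{\mathcal K}.
\]
Decompose $\hat\phi^{(l)}_{jk}=g+h$ with $g\in\mathcal S^{(l)}_k$ and $h\perp\mathcal S^{(l)}_k$.

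The key observation uses the reproducing property: $h(\hat x^{(l-1)}_{ik})=\langle h,\mathcal K(\hat x^{(l-1)}_{ik},\cdot)\rangle=0$ for every $i$. Replace $\hat\phi^{(l)}_{jk}$ by $g$ and keep all other links unchanged. The inputs to layer $l$ are then untouched, because they depend only on layers $1,\ldots,l-1$. At each training point the new link takes the same values as the old one, namely $g(\hat x^{(l-1)}_{ik})=\hat\phi^{(l)}_{jk}(\hat x^{(l-1)}_{ik})$. Hence $\hat x^{(l)}_{ij}$ is unchanged for every $i$. By induction over the layers $l+1,\ldots,L$, every downstream output is also unchanged, since each depends on the training sample only through the previous layer's values at the $n$ points. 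The data-fit term in \eqref{eq:loss} is therefore identical, while the penalty changes by $\lambda_l(\|g\|^2-\|\hat\phi^{(l)}_{jk}\|^2)=-\lambda_l\|h\|^2\le 0$.

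Because $\lambda_l>0$ and $\hat\phi$ is a minimizer, we must have $h=0$; otherwise the modified network would strictly decrease the objective. So $\hat\phi^{(l)}_{jk}\in\mathcal S^{(l)}_k$, which gives the claimed expansion with some coefficients $c^{(l)}_{ijk}$. Applying this to every edge independently yields the representation simultaneously for all edges. The point needing care is that the perturbations do not interfere with one another. This holds because each argument is a comparison against the fixed minimizer: the kernel centers are defined from the minimizer itself, and the argument never modifies two links at once. The conclusion about the coefficient tensors $\mathcal C$ is then immediate.

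The main obstacle is the self-referential nature of the centers: $x^{(l-1)}_{ik}$ depend on the unknown upstream links. I will handle this by treating the minimizer as given and stating the representation relative to its own fitted inputs, rather than claiming a representation with centers fixed in advance. I will also note that the result describes minimizers and says nothing about uniqueness; because of the composition, the objective is nonconvex in $\mathcal C$.
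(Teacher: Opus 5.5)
Your proof is correct and follows essentially the same route as the paper. For each link, you project onto the span of kernel sections at that layer's fitted training inputs. You then use the reproducing property to see that the orthogonal remainder vanishes at those inputs, so every fitted value (and hence the data-fit term) is unchanged, and the strict decrease of the penalty forces the remainder to be zero. The paper phrases this as an induction over layers. You instead compare each edge directly against the fixed minimizer (which shows the paper's inductive hypothesis is not actually needed), and you spell out the downstream invariance through layers $l+1,\ldots,L$ that the paper leaves implicit.
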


For each layer $l$ and input coordinate $k$ of that layer, define the kernel
matrix
$\mathbf Q^{(l-1)}_{k}
=
\left[
\mathcal K\!\left(x^{(l-1)}_{ik},x^{(l-1)}_{i'k}\right)
\right]_{i,i'=1}^{n}$. For $l=1$, this matrix is computed from the observed predictors. For deeper
layers, it is computed from the fitted outputs of the previous layer. Let $\mathbf c^{(l)}_{\cdot jk}
=
(c^{(l)}_{1jk},\ldots,c^{(l)}_{njk})^\top$ be the coefficient vector for the link from coordinate $k$ in layer $l-1$ to
coordinate $j$ in layer $l$. By the reproducing property,
$\bigl\|\phi^{(l)}_{jk}\bigr\|_{\mathcal{H}_{\mathcal K}}^2
=
\mathbf c^{(l)\top}_{\cdot jk}
\mathbf Q^{(l-1)}_{k}
\mathbf c^{(l)}_{\cdot jk}
\equiv
\bigl\|\mathbf c^{(l)}_{\cdot jk}\bigr\|_{\mathbf Q^{(l-1)}_{k}}^2$. Consequently, the infinite-dimensional problem in \eqref{eq:loss} reduces to an
optimization over the finite coefficient collection $\mathcal C$:
\begin{equation}\label{eq:pls_C}
\mathcal L(\mathcal C)
=
\bigl\|\mathbf y-\mathbf X^{(L)}(\mathcal C)\bigr\|_2^2
+
n\sum_{l=1}^{L}\lambda_l
\sum_{j=1}^{D_l}\sum_{k=1}^{D_{l-1}}
\bigl\|\mathbf c^{(l)}_{\cdot jk}\bigr\|_{\mathbf Q^{(l-1)}_{k}}^2,
\end{equation}
where $\mathbf y=(y_1,\ldots,y_n)^\top$ and
$\mathbf X^{(L)}(\mathcal C)
=
\bigl(x^{(L)}_{11}(\mathcal C),\ldots,x^{(L)}_{n1}(\mathcal C)\bigr)^\top$ is the vector of final-layer fitted values. The penalty  in
\eqref{eq:pls_C} is quadratic in the coefficients for fixed layer inputs, while
the residual is generally nonlinear because the layer outputs are obtained
recursively through the forward map.

\subsection{Profile Objective}
\label{sec:profile}

The finite-dimensional objective in \eqref{eq:pls_C} can be simplified further by
profiling out the final layer. This is useful because, conditional on the outputs
of the first \(L-1\) layers, the last layer is an ordinary kernel ridge regression
problem. Let $\mathcal C_{<L}=\{\mathbf C^{(1)},\ldots,\mathbf C^{(L-1)}\}$ denote the lower-layer coefficients. For fixed \(\mathcal C_{<L}\), the layer-\((L-1)\)
outputs \(x^{(L-1)}_{ik}\), \(i=1,\ldots,n\), \(k=1,\ldots,D_{L-1}\), are fixed.
Define 
the aggregate final-layer kernel matrix
$
\mathbf K^{(L-1)}
=
\sum_{k=1}^{D_{L-1}}\mathbf Q^{(L-1)}_{k}.
$
For fixed \(\mathcal C_{<L}\), minimizing \eqref{eq:pls_C} over the final-layer
coefficients lead to the profile objective
\begin{equation}\label{eq:profile}
\mathcal P(\mathcal C_{<L})
= 
\min_{\{\mathbf c^{(L)}_{\cdot 1k}\}_{k=1}^{D_{L-1}}} \mathcal L(\mathcal C) = 
n\lambda_L\,
\mathbf y^\top
\bigl(\mathbf K^{(L-1)}+n\lambda_L\mathbf I_n\bigr)^{-1}
\mathbf y
+
n\sum_{l=1}^{L-1}\lambda_l
\sum_{j=1}^{D_l}\sum_{k=1}^{D_{l-1}}
\bigl\|\mathbf c^{(l)}_{\cdot jk}\bigr\|^2_{\mathbf Q^{(l-1)}_k}.
\end{equation}
The derivation is provided in the Appendix, Section \ref{sec:derive_Profile}.
The first term is the residual-plus-penalty value after optimally fitting the
last layer; the second term retains the RKHS penalties for the lower-layer link
functions. Minimizing \(\mathcal P(\mathcal C_{<L})\) over the lower-layer
coefficients is equivalent to minimizing the original objective
\eqref{eq:pls_C} over all coefficients. When the kernel is differentiable and
\(\mathbf K^{(L-1)}+n\lambda_L\mathbf I_n\) is nonsingular, the envelope theorem
justifies differentiating \(\mathcal P\) with respect to \(\mathcal C_{<L}\)
without differentiating through the profiled final-layer coefficients.

This profiling step improves optimization in practice (Section \ref{app:profile-empirical}, Appendix) because gradient-based
updates focus on learning the lower-layer representation, while the final layer
is solved exactly at each iteration. The idea is related to deep kernel learning
\citep{wilson2016deep}, which also learns a representation while solving a
kernel model at the output layer. The objectives differ, however: deep kernel
learning typically maximizes a Gaussian-process marginal likelihood, whereas
Wahkon directly minimizes a penalized least-squares criterion with explicit
RKHS penalties on the learned link functions for all layers.

\section{Wahkon as a Hierarchical Bayesian Model}
\label{sec:bayes}

The RKHS penalty in \eqref{eq:penalty} has a natural Bayesian interpretation. 
In classical smoothing-spline theory, penalized least squares can be viewed as a maximum a posteriori (MAP) estimator under a Gaussian-process prior \citep{wahba1978improper,wahba1990spline}. 
Wahkon extends this correspondence to a deep architecture by assigning Gaussian-process priors to the univariate link functions at each layer.

\subsection{Prior specification}

Recall that the link function from coordinate \(k\) in layer \(l-1\) to coordinate \(j\) in layer \(l\) is denoted by $\phi^{(l)}_{jk}:\mathbb{R}\to\mathbb{R}$, 
$j=1,\ldots,D_l$, $k=1,\ldots,D_{l-1}$. We assign independent zero-mean Gaussian-process priors
\begin{equation}\label{eq:prior_phi_cov}
\phi^{(l)}_{jk}
\sim
\mathcal{GP}\!\left(0,\tau_l\,\mathcal K(\cdot,\cdot)\right),
\qquad l=1,\ldots,L,
\end{equation}
where \(\tau_l>0\) is a layer-specific prior variance scale. Larger \(\tau_l\) allows more variable link functions in layer \(l\), while smaller \(\tau_l\) imposes stronger shrinkage toward zero. For the observed training inputs, define the vector of link evaluations
$\mathbf z^{(l)}_{jk}
=
\left(
\phi^{(l)}_{jk}(x^{(l-1)}_{1k}),
\ldots,
\phi^{(l)}_{jk}(x^{(l-1)}_{nk})
\right)^\top
\in\mathbb{R}^n$. 
Conditional on the layer-\((l-1)\) outputs, the GP prior implies
\begin{equation}\label{eq:z_conditional_prior}
\mathbf z^{(l)}_{jk}\mid \mathbf X^{(l-1)}
\sim
\mathcal N\!\left(
\mathbf 0,\,
\tau_l \mathbf Q^{(l-1)}_{k}
\right),
\end{equation}
where $\mathbf Q^{(l-1)}_{k}
=
\left[
\mathcal K\!\left(x^{(l-1)}_{ik},x^{(l-1)}_{i'k}\right)
\right]_{i,i'=1}^n$. The vectors \(\mathbf z^{(l)}_{jk}\) are conditionally independent across \((j,k)\), given \(\mathbf X^{(l-1)}\). 
The output of the \(j\)th unit in layer \(l\) is the sum of incoming link evaluations:
$\mathbf X^{(l)}_{\cdot j}
=
\sum_{k=1}^{D_{l-1}}
\mathbf z^{(l)}_{jk}$,  $j=1,\ldots,D_l$.
Therefore, conditional on \(\mathbf X^{(l-1)}\),
\begin{equation}\label{eq:layer_conditional_prior}
\mathbf X^{(l)}_{\cdot j}\mid \mathbf X^{(l-1)}
\sim
\mathcal N\!\left(
\mathbf 0,\,
\tau_l \sum_{k=1}^{D_{l-1}}\mathbf Q^{(l-1)}_{k}
\right).
\end{equation}
This conditional Gaussian representation is the key Bayesian analogue of the forward pass. Marginally across layers, the distribution of \(\mathbf X^{(l)}\) is generally non-Gaussian for \(l\ge 2\), because the covariance matrices in \eqref{eq:layer_conditional_prior} depend on the random outputs from the previous layer.


\begin{proposition}[Marginal moments of hidden neurons]\label{prop:prior_layers}
If the kernel is normalized so that \(\mathcal K(t,t)=1\), then
\begin{equation}\label{eq:prior_mean_var}
\mathbb E\!\left[x^{(l)}_{ij}\right]=0,
\qquad
\mathrm{Var}\!\left[x^{(l)}_{ij}\right]
=
\tau_l D_{l-1}.
\end{equation}
\end{proposition}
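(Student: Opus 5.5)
The plan is to condition on the previous layer's outputs and then apply the tower property together with the law of total variance. The normalization $\mathcal K(t,t)=1$ is what makes the conditional variance free of the random inputs, so no induction over layers on the second moments is needed.

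\emph{Step 1 (conditional moments).} Fix a layer $l$ and condition on $\mathbf X^{(l-1)}$. By \eqref{eq:layer_conditional_prior}, the vector $\mathbf X^{(l)}_{\cdot j}$ is conditionally Gaussian. Its mean is zero, and its covariance is $\tau_l\sum_{k=1}^{D_{l-1}}\mathbf Q^{(l-1)}_k$. Reading off the $i$th diagonal entry gives
\[
\mathbb E\bigl[x^{(l)}_{ij}\mid \mathbf X^{(l-1)}\bigr]=0,\qquad
\mathrm{Var}\bigl[x^{(l)}_{ij}\mid \mathbf X^{(l-1)}\bigr]=\tau_l\sum_{k=1}^{D_{l-1}}\mathcal K\bigl(x^{(l-1)}_{ik},x^{(l-1)}_{ik}\bigr).
\]
Using $\mathcal K(t,t)=1$, every summand equals $1$, so the conditional variance is exactly $\tau_l D_{l-1}$. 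This holds for every realization of $\mathbf X^{(l-1)}$.

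\emph{Step 2 (unconditioning).} The tower property gives $\mathbb E[x^{(l)}_{ij}]=\mathbb E\{\mathbb E[x^{(l)}_{ij}\mid\mathbf X^{(l-1)}]\}=0$. The law of total variance then gives
\[
\mathrm{Var}[x^{(l)}_{ij}]=\mathbb E\{\mathrm{Var}[x^{(l)}_{ij}\mid\mathbf X^{(l-1)}]\}+\mathrm{Var}\{\mathbb E[x^{(l)}_{ij}\mid\mathbf X^{(l-1)}]\}=\tau_l D_{l-1}+0.
\]
This proves \eqref{eq:prior_mean_var}. For $l=1$, $\mathbf X^{(0)}$ is the observed design, so there is nothing to uncondition. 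For $l\ge2$, the argument uses only the conditional structure, even though the marginal law is non-Gaussian.

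\emph{Main obstacle.} The only delicate point is rigor in treating $\phi^{(l)}_{jk}(x^{(l-1)}_{ik})$ as a GP evaluated at a random input. This requires the GP at layer $l$ to be independent of all lower-layer links, which follows from the independent priors in \eqref{eq:prior_phi_cov}. It also requires a measurable version of the process so that the conditional law \eqref{eq:z_conditional_prior} is justified. I would state this explicitly, for instance by taking a continuous version when $\mathcal K$ is continuous. I would also check integrability: the conditional second moment is the constant $\tau_l D_{l-1}$, so all expectations are finite and the total-variance decomposition applies.
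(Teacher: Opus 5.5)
Your proposal is correct and follows the paper's own proof: condition on $\mathbf X^{(l-1)}$, use $\mathcal K(t,t)=1$ to get conditional mean $0$ and conditional variance exactly $\tau_l D_{l-1}$, then apply the tower property and the law of total variance. The only difference is that you cite the conditional Gaussian law of $\mathbf X^{(l)}_{\cdot j}$ directly rather than re-deriving it from conditional independence across $k$, and you add measurability and independence remarks that the paper leaves implicit.
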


\begin{remark}[Variance scaling]
Equation~\eqref{eq:prior_mean_var} shows that the conditional prior variance grows linearly with the number of incoming links \(D_{l-1}\). A variance-preserving choice is therefore \(\tau_l=\tau/D_{l-1}\), which gives
$\mathrm{Var}\!\left[x^{(l)}_{ij}\right]=\tau$.
This scaling is analogous in spirit to variance-preserving initialization rules for neural networks \citep{lecun2002efficient}, but here it is expressed at the function-prior level.
\end{remark}

\subsection{Posterior Distribution and MAP Equivalence}
\label{subsec:posterior}

We now connect the hierarchical GP prior to the penalized estimator in
\eqref{eq:loss}--\eqref{eq:penalty}. Assume the Gaussian regression model
$y_i=\eta_{\text{Wahkon}}(\mathbf x_i)+\epsilon_i$,
 $\epsilon_i\overset{\mathrm{iid}}{\sim}\mathcal N(0,\sigma^2)$.  Under the GP priors in \eqref{eq:prior_phi_cov}, the negative log-posterior for
the collection of link functions is, up to constants,
\begin{equation}\label{eq:function_posterior}
\frac{1}{2\sigma^2}
\sum_{i=1}^{n}
\bigl[y_i-\eta_{\text{\Wahkon}}(\mathbf x_i)\bigr]^2
+
\frac12
\sum_{l=1}^{L}\frac{1}{\tau_l}
\sum_{j=1}^{D_l}\sum_{k=1}^{D_{l-1}}
\bigl\|\phi^{(l)}_{jk}\bigr\|_{\mathcal H_{\mathcal K}}^2 .
\end{equation}
Thus the posterior mode is obtained by balancing data fidelity against the
RKHS norm of each learned link function. Comparing \eqref{eq:function_posterior} with the penalized objective
\eqref{eq:loss}--\eqref{eq:penalty} shows that the two criteria agree after the
variance--penalty calibration
$\tau_l=\frac{\sigma^2}{n\lambda_l},l=1,\ldots,L.$
This yields the following equivalence.

\begin{theorem}[MAP--penalized equivalence]\label{thm:MAP}
Suppose the Gaussian-noise model holds and each link function has the independent
GP prior in \eqref{eq:prior_phi_cov}. If
$
\tau_l=\frac{\sigma^2}{n\lambda_l},
\qquad l=1,\ldots,L,
$
then the maximum a posteriori estimator of the link functions coincides with the
minimizer of the penalized least-squares problem
\eqref{eq:loss}--\eqref{eq:penalty}.
Equivalently, under the representer parameterization of
Theorem~\ref{thm:deep-representer}, if
\[
\mathcal C^*=\{\mathbf C^{(l)*}:l=1,\ldots,L\}
\]
minimizes the finite-dimensional objective \eqref{eq:pls_C}, then the fitted
link-evaluation vectors
\[
\mathbf z^{(l)*}_{jk}
=
\mathbf Q^{(l-1)}_{k}\,
\mathbf c^{(l)*}_{\cdot jk},
\qquad
j=1,\ldots,D_l,\quad k=1,\ldots,D_{l-1},\quad l=1,\ldots,L,
\]
give the posterior mode under the hierarchical GP model.
\end{theorem}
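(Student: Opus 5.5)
The plan is to show that the negative log-posterior \eqref{eq:function_posterior} is a positive multiple of the penalized criterion \eqref{eq:loss}--\eqref{eq:penalty}. Both objectives are then minimized by exactly the same collection of link functions, and the second statement follows by passing to the representer parameterization of Theorem~\ref{thm:deep-representer}.

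\textbf{Step 1 (function-space posterior).} I would first justify \eqref{eq:function_posterior}. Under the Gaussian-noise model, the negative log-likelihood of $\{\phi^{(l)}_{jk}\}$ is, up to constants,
\[
(2\sigma^2)^{-1}\sum_i[y_i-\eta_{\mathrm{Wahkon}}(\mathbf x_i)]^2 .
\]
This term depends on the links only through the forward map. For the prior, the usual Wahba-type formal argument applies. Restrict to any finite set of evaluation points containing the kernel centres. A $\mathcal{GP}(0,\tau_l\mathcal K)$ prior then has negative log-density $\frac{1}{2\tau_l}\mathbf z^\top \mathbf Q^{-1}\mathbf z$. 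This equals $\frac{1}{2\tau_l}\|\phi\|^2_{\mathcal H_{\mathcal K}}$ for the minimum-norm interpolant of those values. Independence across $(l,j,k)$ makes these terms add, which gives \eqref{eq:function_posterior}.

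\textbf{Step 2 (calibration).} Next I would multiply \eqref{eq:function_posterior} by $2\sigma^2$. This positive rescaling does not change the minimizer. The result is
\[
\sum_i[y_i-\eta(\mathbf x_i)]^2+\sum_l\frac{\sigma^2}{\tau_l}\sum_{j,k}\|\phi^{(l)}_{jk}\|^2_{\mathcal H_{\mathcal K}} .
\]
Substituting $\tau_l=\sigma^2/(n\lambda_l)$ turns each coefficient into $n\lambda_l$. The expression is therefore exactly $\mathcal L$ in \eqref{eq:loss}, so the MAP estimator and the penalized minimizer coincide.

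\textbf{Step 3 (finite-dimensional form).} By Theorem~\ref{thm:deep-representer}, a minimizer has $\phi^{(l)}_{jk}=\sum_i c^{(l)}_{ijk}\mathcal K(x^{(l-1)}_{ik},\cdot)$. Evaluating at the centres gives $\mathbf z^{(l)}_{jk}=\mathbf Q^{(l-1)}_k\mathbf c^{(l)}_{\cdot jk}$. The RKHS norm is $\mathbf c^\top\mathbf Q\mathbf c=\mathbf z^\top\mathbf Q^{+}\mathbf z$, which matches the conditional Gaussian prior \eqref{eq:z_conditional_prior} on $\mathbf z^{(l)}_{jk}$. The conditional prior chain $\prod_l p(\mathbf z^{(l)}\mid\mathbf X^{(l-1)})$ combined with the Gaussian likelihood then reproduces \eqref{eq:pls_C} after the same rescaling. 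Hence $\mathcal C^*$ yields the posterior mode in $\mathbf z$-coordinates.

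\textbf{Main obstacle.} The delicate step is Step 1 and its $\mathbf z$-space analogue in Step 3. A GP has no Lebesgue density on an infinite-dimensional space, so the ``posterior mode'' has to be read in one of two ways: formally, in Wahba's sense via the RKHS (Cameron--Martin) norm, or through the finite-dimensional conditional densities \eqref{eq:z_conditional_prior}. The second reading has its own difficulties. The covariances $\mathbf Q^{(l-1)}_k$ depend on the outputs of earlier layers, and they may be singular. I would handle this by working on the range of $\mathbf Q$, using the pseudo-inverse and $\mathbf z\in\mathrm{range}(\mathbf Q)$. I would also treat the log-determinant terms $\tfrac12\log\det(\tau_l\mathbf Q^{(l-1)}_k)$ explicitly. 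These terms depend on lower-layer parameters, so they must either be excluded by defining the mode in function space, as in \eqref{eq:function_posterior}, or be dropped as part of ``up to constants''. I would state that convention explicitly, so that the identification in Step 3 is exactly the function-space equivalence of Step 2, restricted to the representer span.
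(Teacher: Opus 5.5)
Your proposal is correct and matches the paper's proof. The paper likewise writes the negative log-posterior as $(2\sigma^2)^{-1}$ times the residual sum of squares plus $\tfrac12\sum_l\tau_l^{-1}\sum_{j,k}\|\phi^{(l)}_{jk}\|^2_{\mathcal H_{\mathcal K}}$, multiplies by $2\sigma^2$, matches $\sigma^2/\tau_l=n\lambda_l$, and cites Theorem~\ref{thm:deep-representer} for the coefficient form. Your caveat about the $\log\det(\tau_l\mathbf Q^{(l-1)}_k)$ terms goes beyond the paper, which silently absorbs the prior normalizers into ``constants''. The caveat is well founded: for $L\ge 2$ these terms depend on lower-layer outputs and can even make the joint density of the $\mathbf z$'s unbounded (e.g.\ take $\mathbf z^{(L)}=\mathbf 0$ and let two entries of some $\mathbf X^{(L-1)}_{\cdot k}$ coalesce). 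So the $\mathbf z$-space statement holds only under the function-space convention you adopt, and you should discard your alternative of treating these terms as constants.
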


The theorem extends the classical equivalence between smoothing splines and
Gaussian-process MAP estimation \citep{wahba1978improper,wahba1990spline} to the
deep compositional setting of Wahkon. The regularization parameters
\(\lambda_l\) can therefore be interpreted as ratios of observation noise
variance to layer-specific prior variance.

\section{Asymptotic Analysis}\label{sec:thm}


Let $\widehat{\eta}_{\text{Wahkon}}$ be a minimizer of the penalized objective
\eqref{eq:loss}--\eqref{eq:penalty}. For simplicity of exposition, assume that
the layer-specific penalties have a common order,
\[
\lambda_l \asymp \lambda_n,
\qquad l=1,\ldots,L.
\]
Let $\mathcal F_{\text{Wahkon}}$ denote the class of functions represented by
the architecture $(D_0,\ldots,D_L)$ under the RKHS-norm constraints stated below.
We define the oracle approximation
\[
\eta^*
\in
\arg\min_{\tilde \eta\in\mathcal F_{\text{Wahkon}}}
\int \bigl[\tilde\eta(\mathbf x)-\eta(\mathbf x)\bigr]^2\,dP_{\mathbf X},
\]
and the associated approximation error
\[
\Delta_{\mathrm{approx}}^2
=
\int \bigl[\eta^*(\mathbf x)-\eta(\mathbf x)\bigr]^2\,dP_{\mathbf X}.
\]
When the target lies in the Wahkon class, this term is zero. More generally,
it quantifies the bias induced by using a finite architecture and a chosen RKHS.

\begin{theorem}[Convergence rate]\label{thm:rate}
Under the regularity Assumptions~\ref{assump:bounded_domain}--\ref{assump:entropy_rkhs} provided in the Appendix, Section \ref{sec:Pre_proof_thm}, we have
\[
\int
\bigl[
\widehat{\eta}_{\text{Wahkon}}(\mathbf x)-\eta(\mathbf x)
\bigr]^2
\,dP_{\mathbf X}
=
O_p\!\left(
\Delta_{\mathrm{approx}}^2
+
\lambda_n
+
A_D n^{-1}\lambda_n^{-1/(2\alpha)}
\right),
\]
\end{theorem}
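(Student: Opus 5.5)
We prove the bound with the standard penalized least-squares argument (van de Geer style), using a basic inequality together with a localized metric-entropy bound for the Wahkon class. The hidden assumptions presumably give:
\begin{itemize}
\item a bounded domain;
\item bounded, Lipschitz kernels, so that compositions stay Lipschitz on bounded sets;
\item sub-Gaussian noise;
\item an entropy bound $\log N(\delta,\{\phi:\|\phi\|_{\mathcal H_{\mathcal K}}\le 1\},\|\cdot\|_\infty)\lesssim \delta^{-1/\alpha}$ for the univariate RKHS ball.
\end{itemize}
Throughout, write $\|\cdot\|_n$ for the empirical norm and $\|\cdot\|$ for $L^2(P_{\mathbf X})$. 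The constant $A_D$ collects the architecture-dependent factors: the number of links $\sum_l D_lD_{l-1}$ and the Lipschitz constants propagated through depth $L$.

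\textbf{Step 1 (basic inequality).} Since $\widehat\eta=\widehat{\eta}_{\text{Wahkon}}$ minimizes \eqref{eq:loss} and $\eta^*$ is feasible (with finite penalty $J(\eta^*)$), comparing objectives gives
\[
\|\widehat\eta-\eta\|_n^2+J(\widehat\eta)\le \|\eta^*-\eta\|_n^2+J(\eta^*)+\frac{2}{n}\sum_{i=1}^n\epsilon_i\bigl[\widehat\eta(\mathbf x_i)-\eta^*(\mathbf x_i)\bigr].
\]
Because $\lambda_l\asymp\lambda_n$, we have $J(\eta^*)\lesssim\lambda_n$. The term $\|\eta^*-\eta\|_n^2$ is $O_p(\Delta^2_{\mathrm{approx}})$ by Markov's inequality.

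\textbf{Step 2 (entropy of the Wahkon class).} This is the main obstacle. For links in RKHS balls of radius $M$, we bound the sup-norm entropy of the composed class. Perturbing one link by $\delta$ in sup norm moves the output by at most $\delta$ times the product of Lipschitz constants of the downstream links. These Lipschitz constants are controlled by the RKHS norms through the kernel's derivative bound; this is where the regularity assumptions enter. A product cover over all $\sum_l D_lD_{l-1}$ links therefore gives
\[
\log N(\delta,\mathcal F_{\text{Wahkon}}(M),\|\cdot\|_\infty)\lesssim A_D(M/\delta)^{1/\alpha}.
\]
One caveat: the radius $M$ of the hidden links is random, since it is controlled by $J(\widehat\eta)/\lambda_n$. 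We handle this either by peeling over $M$, or by first establishing $J(\widehat\eta)=O_p(\lambda_n)$ and then working on a fixed-radius class.

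\textbf{Step 3 (empirical process and rate).} Apply a modulus-of-continuity inequality for sub-Gaussian weighted processes, e.g.\ van de Geer's Lemma 8.4, to the class of differences $g=\widehat\eta-\eta^*$ normalized by $1+J(g)/\lambda_n$. With entropy exponent $1/\alpha<2$, this yields
\[
\frac1n\sum_i\epsilon_ig(\mathbf x_i)=O_p\!\left(n^{-1/2}\,\|g\|_n^{1-1/(2\alpha)}\,\bigl(A_D(1+J(g)/\lambda_n)\bigr)^{1/(4\alpha)}\cdot\sqrt{A_D}\right).
\]
Substitute this into Step 1 and solve the resulting quadratic-type inequality, using Young's inequality to absorb $\|g\|_n^2$ and $J$ into the left side. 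This gives
\[
\|\widehat\eta-\eta^*\|_n^2+J(\widehat\eta)=O_p\!\left(\Delta^2_{\mathrm{approx}}+\lambda_n+A_Dn^{-1}\lambda_n^{-1/(2\alpha)}\right).
\]
The triangle inequality then gives the same bound for $\|\widehat\eta-\eta\|_n^2$.

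\textbf{Step 4 (empirical to population norm).} Transfer the bound to $\|\cdot\|$ by a uniform ratio-type deviation inequality for $\|g\|_n^2-\|g\|^2$ over the bounded class $\mathcal F_{\text{Wahkon}}(M)$, using the same entropy bound with bracketing in sup norm. The discrepancy is of smaller order than $A_Dn^{-1}\lambda_n^{-1/(2\alpha)}$ plus $o_p$ of the target, which completes the proof.
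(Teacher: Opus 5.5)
Your outline follows the paper's route: a basic inequality, a layerwise product cover with Lipschitz propagation for the composed class, and a van de Geer modulus bound. Your Steps 3--4 and the sub-Gaussian noise assumption make explicit what the paper leaves to a single citation; the paper's assumptions contain no noise condition at all. The gap is exactly at the caveat you flag. The Step 2 bound $\log N(\delta,\mathcal F_{\text{Wahkon}}(M),\|\cdot\|_\infty)\le C A_D (M/\delta)^{1/\alpha}$, with $A_D$ free of $M$, is false for a composition in your own setup. Suppose every link has RKHS norm at most $M$ and, via the kernel derivative, a Lipschitz constant of order $M$. Then a sup-norm perturbation of a layer-$l$ link is amplified by a factor of order $M^{L-l}$ on its way to the output, so the entropy grows at least like $M^{L/\alpha}\delta^{-1/\alpha}$. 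In addition, deeper links must be covered on ranges of length of order $M$. Van de Geer's normalization by $1+J(g)/\lambda_n$ relies on the sublevel sets of $J$ being dilates of one another, which holds for a linear RKHS class but not here. Moreover, $J(\widehat\eta-\eta^*)$ is not even defined, since $J$ is a functional of the link parametrization and a difference of two networks is not a network of the same architecture. So the peeling does not return $A_D n^{-1}\lambda_n^{-1/(2\alpha)}$, and for larger depth it need not close at all. Your alternative, first proving $J(\widehat\eta)=O_p(\lambda_n)$, is circular. The basic inequality with Cauchy--Schwarz, or comparison with the zero network, only gives $J(\widehat\eta)=O_p(1)$, i.e.\ link norms of order $\lambda_n^{-1/2}$.

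The paper never confronts this because it implicitly treats Assumption~\ref{assump:smooth_links} ($\|\phi^{(l)}_{jk}\|_{\mathcal H_{\mathcal K}}\le C_\phi D_l^{-\beta}$ plus uniform Lipschitz) as a constraint on every member of $\mathcal F_{\text{Wahkon}}$, the estimator included. The radius is then deterministic, Theorem~\ref{thm:entropy} gives a fixed entropy $CA_D\delta^{-1/\alpha}$, and the penalty enters only through $\lambda_nJ(\eta^*)=O(\lambda_n)$. Under that hypothesis your Step 3 needs no normalization. The fixed-class modulus $n^{-1/2}\sqrt{A_D}\,\|g\|_n^{1-1/(2\alpha)}$ yields $\|g\|_n^2=O_p\bigl(\Delta^2_{\mathrm{approx}}+\lambda_n+(A_D/n)^{2\alpha/(2\alpha+1)}\bigr)$, and $(A_D/n)^{2\alpha/(2\alpha+1)}\le\lambda_n+A_Dn^{-1}\lambda_n^{-1/(2\alpha)}$ for every $\lambda_n$. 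The same hypothesis is also what produces the stated $A_D$, whose exponent involves $\beta$. Each layer-$l$ link lies in a ball of radius $C_\phi D_l^{-\beta}$ and is covered at accuracy $\delta/D_{l-1}$. This gives entropy $CD_l^{-\beta/\alpha}(D_{l-1}/\delta)^{1/\alpha}$ per link and $CD_l^{1-\beta/\alpha}D_{l-1}^{1+1/\alpha}\delta^{-1/\alpha}$ per layer, which the paper then aggregates into $A_D$. A common radius $M$ and a link count $\sum_l D_lD_{l-1}$, as in your Step 2, cannot recover this. For fixed widths the difference is only a constant inside $O_p$, but the architecture dependence is the content of the statement. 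Lastly, your Step 3 modulus has a spurious factor $A_D^{1/(4\alpha)}$. Dudley's integral gives $n^{-1/2}\sqrt{A_D}\,(1+J/\lambda_n)^{1/(4\alpha)}\|g\|_n^{1-1/(2\alpha)}$; carrying your extra factor through would put $A_D^{1+1/(2\alpha)}$, not $A_D$, in front of $n^{-1}\lambda_n^{-1/(2\alpha)}$.
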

where $A_D= \left(\prod_{l=0}^{L}D_l\right)^{
\frac{\alpha+1-\beta}{\alpha}}.$
The three terms in Theorem~\ref{thm:rate} have distinct interpretations.
The first is the approximation error of the chosen architecture. The second is
the bias introduced by RKHS regularization. The third is the stochastic error,
controlled by the entropy of the Wahkon class. Balancing the second and third
terms gives
$\lambda_n
\asymp
\left(\frac{A_D}{n}\right)^{\frac{2\alpha}{2\alpha+1}}
.$
The rate displays the role of the architecture. Increasing width or
depth may reduce $\Delta_{\mathrm{approx}}^2$ by enlarging the approximation
class, but it can also increase the error through $A_D$. The norm
decay parameter $\beta$ governs this tradeoff. When link norms decrease rapidly
with width, the entropy contribution is moderated; when the decay is weak,
additional units can increase complexity and slow convergence. Thus the theorem
does not simply advocate larger networks; it specifies how architectural growth
must be balanced with RKHS regularization to maintain favorable finite-sample
behavior.

\section{Simulation Studies}\label{sec:simulation}

We evaluate Wahkon to (i) verify prior behavior across depth and (ii) assess predictive accuracy on four benchmark functions spanning compositional and nested structures. Full data-generating mechanisms, architectures, and baseline configurations are provided in Appendix~\ref{supp:sim}.

\subsection{Prior behavior across layers}

To examine the hierarchical GP prior, we simulated $1{,}000$ draws of hidden-layer outputs in a depth-$5$ Wahkon with moderate width, using the Gaussian kernel $\mathcal{K}(x,y)=\exp\{-(x-y)^2\}$.
As shown in Figure~\ref{fig:prior}, the first layer is approximately Gaussian, as expected from the prior. Deeper layers exhibit significant non-Gaussianity.
The prior distribution exhibits two notable characteristics (Figure~\ref{fig:prior}b):
\begin{itemize}
  \item[1.] \textbf{Concentration around zero}: Approximately 30\% of Monte Carlo samples from deeper layers ($l \geq 3$) show squared Mahalanobis distances less than 50, suggesting high probability mass around the prior mean $\EE[\mathbf{X}_{\cdot,d_l}^{(l)}] = \mathbf{0}$.
  \item[2.] \textbf{Heavy-tailed behavior}: Approximately 20\% of samples demonstrate extreme distances ($>150$), exceeding the 99.91\% quantile of the $\chi^2_{100}$ reference distribution.
\end{itemize}

This ``spike-and-slab--like'' shape indicates automatic shrinkage of many links while preserving the capacity for a subset to remain active.
This dual behavior may explain Wahkon’s empirical
success in balancing model complexity with generalization performance.

\begin{figure}[!htb]
  \centering
  \includegraphics[width=0.7\textwidth]{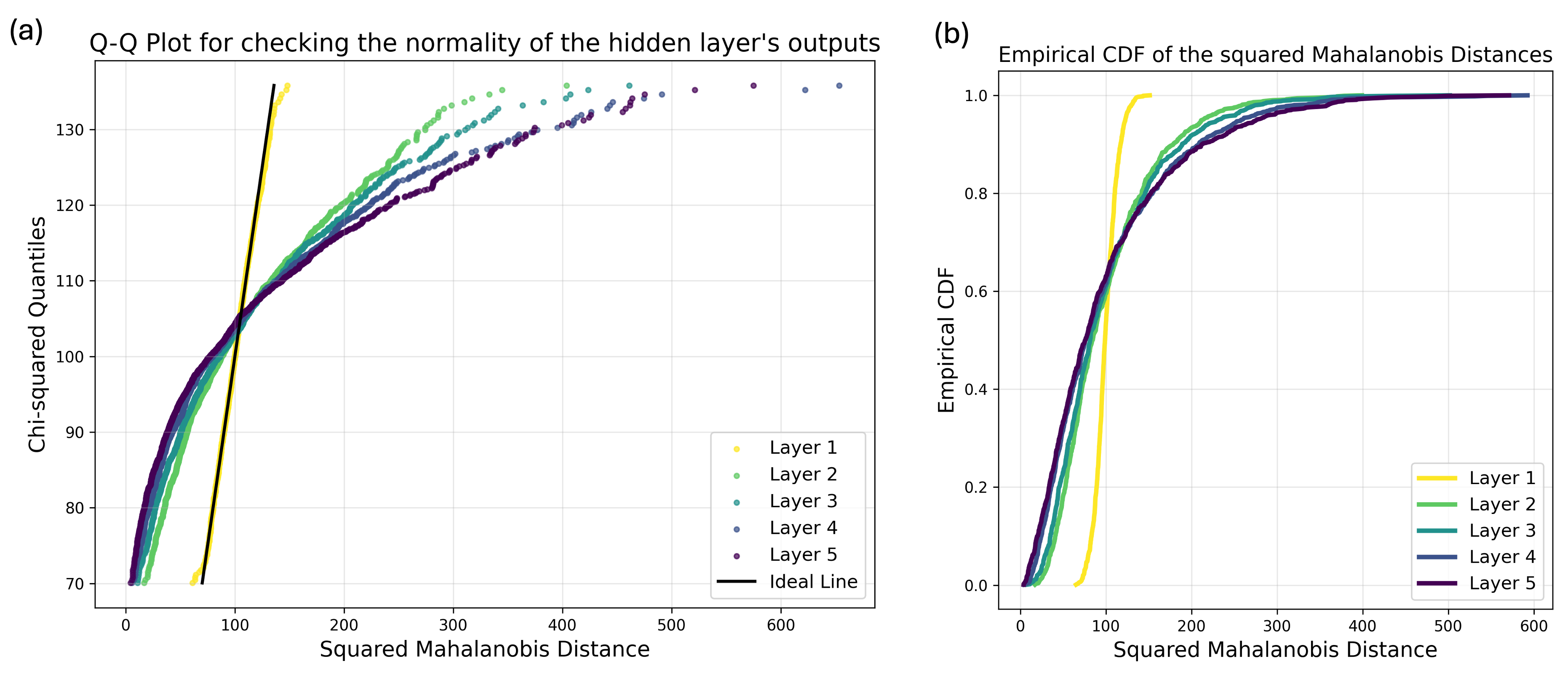}
  \vspace{-10pt}
  \caption{
    Prior distribution analysis for hidden layer outputs in a $[4,4,4,4,4,4]$-Wahkon model.
    (a) Q-Q plot comparing empirical quantiles of squared Mahalanobis distances (x-axis) against theoretical $\chi^2_{100}$ quantiles (y-axis). The black line indicates expected behavior under a 100-dimensional multivariate normal distribution.
    (b) Empirical cumulative distribution function (CDF) of squared Mahalanobis distances.
  }
  \label{fig:prior}
\end{figure}

\subsection{Function-approximation benchmarks}\label{sec:benchmarks}
We compare Wahkon against three competing methods: KAN \citep{liu2024kan}, MLP, and NTK \citep{jacot2018neural}. KAN and Wahkon share the same network width (neuron counts per layer) and grid size $G=9$. The MLP maintains the same depth as the Wahkon/KAN architecture, with each hidden layer's neuron count scaled by $\sqrt{G}=3$ to match the parameter count of a KAN layer with grid size~$G$. The NTK is the infinite-width kernel induced by using the same depth as WKN/KAN in each experiment.  Full implementation details are provided in Appendix~\ref{supp:sim}. 

We evaluate all methods on four benchmark functions ($f_1$--$f_4$) that span highly nonlinear structure ($f_1$, $D=3$), high-dimensional radial oscillation ($f_2$, $D=10$), additive composition ($f_3$, $D=4$), and nested composition ($f_4$, $D=6$). Responses are generated as $y_i = f_k(\bx_i) + 0.1\,\varepsilon_i$ with $\varepsilon_i \sim \mathcal{N}(0,1)$ and inputs from $\mathrm{Unif}[-1,1]^D$. Each experiment is repeated 100 times on randomly generated replicate samples for each setting. The explicit formula of $f_1$--$f_4$ and corresponding model architectures are provided in Appendix~\ref{supp:sim}.

Fig.\ref{fig:simu} plots test $\log(\mathrm{RMSE})$ evaluated on $1,000$ randomly generated testing data points versus training sample size $\in [100,200,400,800,1600,3200]$ for all four functions. We found that Wahkon attains uniformly lower error than KAN, MLP, and NTK.
MLPs underperform despite similar parameter counts, suggesting that learning univariate links within an RKHS is more sample-efficient than employing fixed activations. 
KANs narrow the gap at larger sample sizes, but the absence of explicit RKHS regularization limits their finite-sample behaviors, leading to slow convergence for small samples compared to Wahkon.
NTK performs competitively at small $n$ but plateaus earlier.

\begin{figure}[t]
  \centering
  \includegraphics[width=1\textwidth]{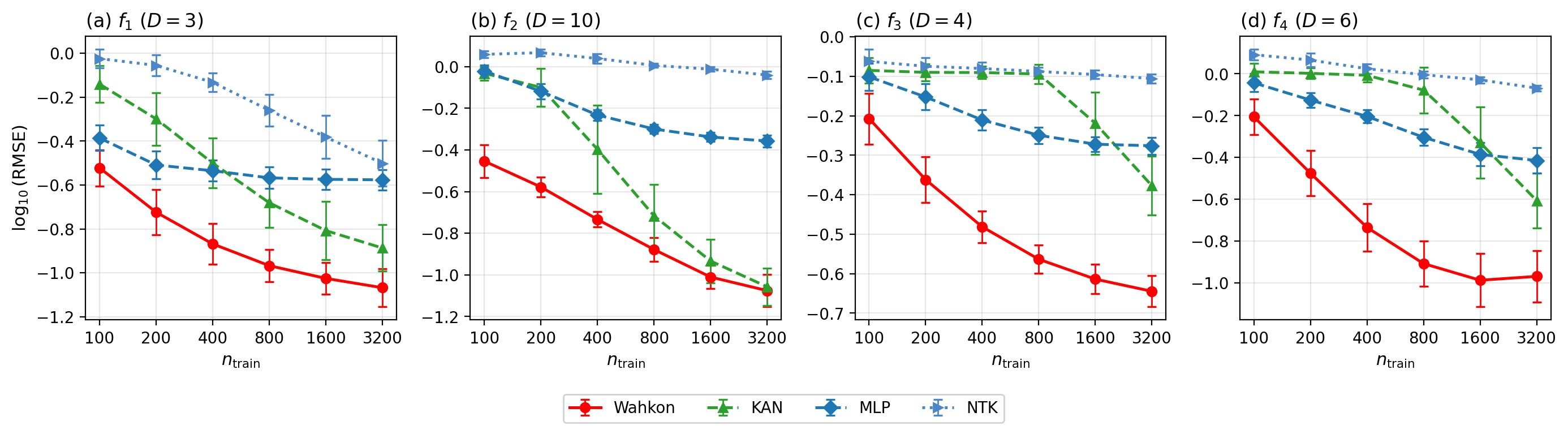}
  \caption{Test RMSE (log scale) versus training sample size for four benchmark functions. Curves show means over $100$ replications; error bars indicate $\pm$ one standard deviation. 
  }
  \label{fig:simu}
\end{figure}

\section{Application: Predicting Surface Proteins from scRNA-seq}\label{sec:cite}

Single-cell multi-omics promises richer biological insight but remains costly and technically challenging at scale. CITE-seq \citep{stoeckius2017large} profiles RNA and surface proteins jointly, yet antibody panels are expensive, subject to batch effects, and can suffer from dropout. Accurately predicting protein abundances from RNA would lower costs for protein panel design, and reveal RNA--protein relationships \citep{liu2024orthogonal,ma2025bisection}. 

We analyzed a human bone marrow mononuclear cell dataset \citep{Stuart2019} with $30{,}672$ cells, $17{,}009$ genes, and $25$ surface proteins. RNA counts were library-size normalized and log-transformed, followed by selection of the top $1{,}000$ most variable genes and PCA reduction to $30$ principal components. $25$ protein's abundance, measured by Antibody-derived tag (ADT) counts were transformed using centered log-ratio (CLR) normalization. The resulting $30$-dimensional RNA PCA matrix served as input features, and each of the $25$ protein targets was predicted independently.

For each protein, we trained a separate predictor using $10 \%$ randomly selected cells for training and the remaining for testing. Wahkon used the architecture $[30\to 15\to 15 \to 15 \to 1]$ with the same training configuration and baseline methods described in Section~\ref{sec:benchmarks}. Each method was fit across $20$ random splits to assess stability, and performance was summarized by test RMSE.
Figure~\ref{fig:real} reports test MSE across the $25$ proteins. Wahkon achieved the lowest average error compared with KAN, MLP, and NTK, and was best for the majority of targets with smaller variability across repetitions. Gains were particularly pronounced for CD45 isoforms (CD45RA, CD45RO), which differentiate na\"ive and memory T cells.

\begin{figure}[t]
  \centering
  \includegraphics[width=0.8\textwidth]{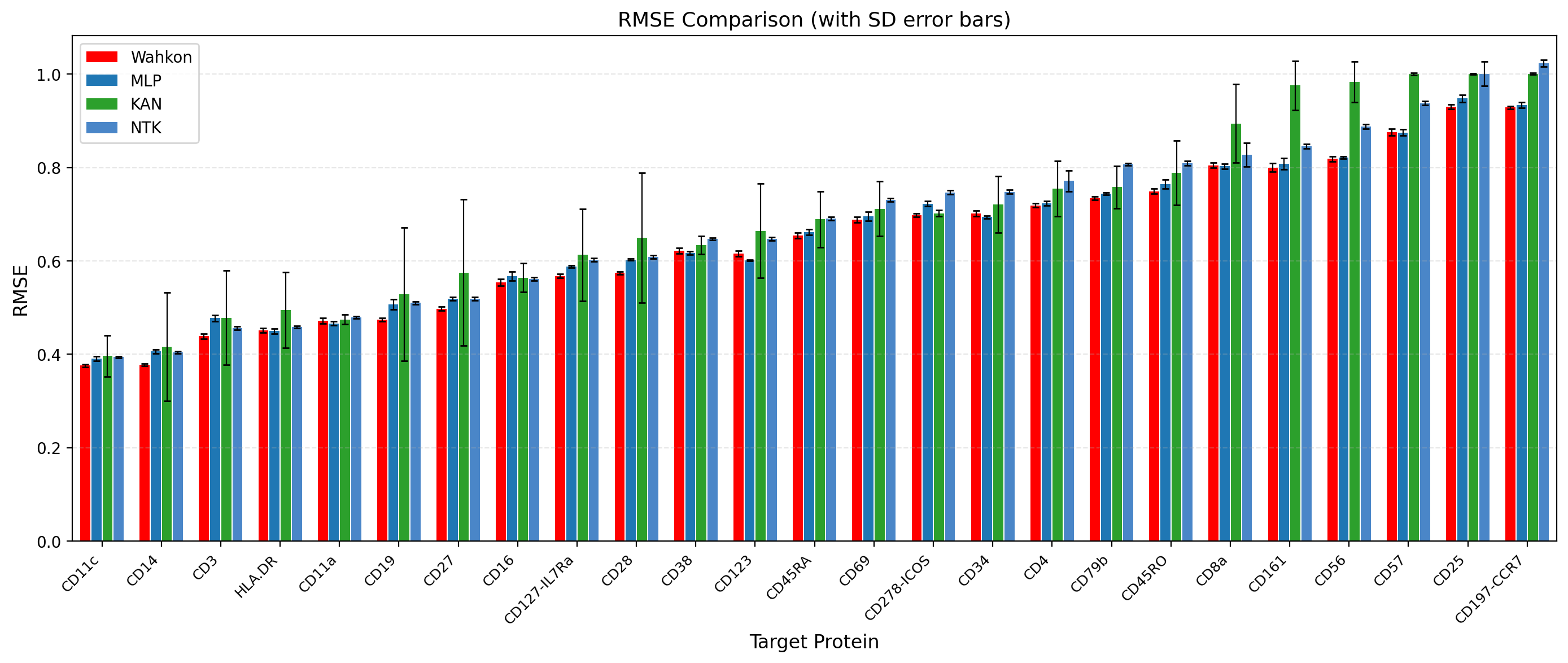}
  \vspace{-10pt}
  \caption{Prediction error for $25$ surface proteins in human bone marrow CITE-seq. Bars show mean test RMSE over repeated random splits; error bars indicate one standard deviation. }
  \label{fig:real}
  \vspace{-10pt}
\end{figure}


\section{Conclusion}\label{sec:conclusion}

Wahkon unifies Kolmogorov's compositional view with RKHS regularization to produce a deep model that is simultaneously expressive, and statistically controlled. The architecture replaces fixed activations by learned univariate links, and the hierarchical penalty enforces layerwise smoothness.
This foundation supports a Bayesian formulation in which the penalized estimator is the MAP under a hierarchical Gaussian-process prior. Entropy bounds and convergence results clarify how depth and width interact with smoothness, and empirical studies align with these predictions while demonstrating gains over strong baselines, including MLP, KAN, and NTK.

There are limits and opportunities. Computational cost grows with sample size and network width, suggesting approximate kernels (Nystr\"om, random features) and low-rank structure as natural accelerations. 
Kernel choice influences performance; data-adaptive or learned kernels and sparsity-promoting priors over links may further improve accuracy and interpretability. 
By providing a principled bridge between modern representation learning and classical statistical inference, Wahkon offers a template for building deep models that are not only accurate but also reliable and explainable.

Another promising direction is uncertainty quantification via the Bayesian equivalence established in Section~\ref{sec:bayes}. 
In practice, full posterior inference over all layers is computationally prohibitive due to the nonlinear coupling between layers. Scalable approximations, such as variational Bayesian inference, offer practical pathways but may underestimate uncertainty. Developing efficient posterior inference methods that propagate uncertainty through the compositional structure remains an important open problem and a natural next step for this framework.

\bibliographystyle{chicago}
\bibliography{ref}

\appendix

\begin{center}
{\large\bf Appendix}
\end{center}
\section{Derivations of Profile Objective}\label{sec:derive_Profile}
By the standard kernel ridge regression identity, the fitted final-layer values are 
$$\widehat{\mathbf X}^{(L)}
=
\mathbf K^{(L-1)}
\bigl(\mathbf K^{(L-1)}+n\lambda_L\mathbf I_n\bigr)^{-1}
\mathbf y,$$
and the minimum value of the final-layer objective is
$$n\lambda_L\,
\mathbf y^\top
\bigl(\mathbf K^{(L-1)}+n\lambda_L\mathbf I_n\bigr)^{-1}
\mathbf y$$.
Therefore, after profiling out the last layer, we obtain the objective
\begin{equation}
\mathcal P(\mathcal C_{<L})
= 
\min_{\{\mathbf c^{(L)}_{\cdot 1k}\}_{k=1}^{D_{L-1}}} \mathcal L(\mathcal C) = 
n\lambda_L\,
\mathbf y^\top
\bigl(\mathbf K^{(L-1)}+n\lambda_L\mathbf I_n\bigr)^{-1}
\mathbf y
+
n\sum_{l=1}^{L-1}\lambda_l
\sum_{j=1}^{D_l}\sum_{k=1}^{D_{l-1}}
\bigl\|\mathbf c^{(l)}_{\cdot jk}\bigr\|^2_{\mathbf Q^{(l-1)}_k}.
\end{equation}
\section{Implementation Details}\label{sec:implementation}

\subsection{Kernel and Efficient computation}
The Gaussian kernel is used throughout:
\[
\mathcal{K}(x,y)=\exp\!\left(-\frac{(x-y)^2}{2\ell^2}\right),\qquad \ell=0.5.
\]
\paragraph{Grid-based approximation.}
The exact representer form uses kernel evaluations at the \(n\) layer-specific
inputs for each link, which can be costly when \(n\) is large. In implementation,
we use a fixed grid of \(G\) inducing points
$u_1<\cdots<u_G$ for each univariate link. A link is approximated by $\phi^{(l)}_{jk}(t)
\approx
\sum_{g=1}^{G} a^{(l)}_{gjk}\,\mathcal K(u_g,t)$,
where \(a^{(l)}_{gjk}\) are the grid coefficients. Let $\mathbf R^{(l-1)}_{k}
=
\left[
\mathcal K(x^{(l-1)}_{ik},u_g)
\right]_{i=1,\ldots,n;\ g=1,\ldots,G}$
be the \(n\times G\) cross-kernel matrix between the layer-\((l-1)\) inputs and
the inducing grid. Then evaluations of the link on the training data are
approximated by \(\mathbf R^{(l-1)}_{k}\mathbf a^{(l)}_{\cdot jk}\), reducing
the per-link evaluation cost from \(O(n^2)\) in the full representer expansion to
\(O(nG)\). The RKHS norm is approximated by
$\bigl\|\phi^{(l)}_{jk}\bigr\|_{\mathcal H_{\mathcal K}}^2
\approx
\mathbf a^{(l)\top}_{\cdot jk}
\mathbf K_{UU}
\mathbf a^{(l)}_{\cdot jk},
\qquad
\mathbf K_{UU}=[\mathcal K(u_g,u_{g'})]_{g,g'=1}^{G}$. 
This grid approximation is used only for computation; the theoretical results
above are stated for the exact representer form. Further implementation details
are provided in Appendix~\ref{supp:sim}.

Each univariate link function is parameterized by a kernel expansion on $G=9$ equally spaced inducing points that remain fixed during training. 
The grid-based parameterization reduces kernel matrix construction from $O(n^2)$ to $O(nG)$ per link function. Mini-batch training further limits the per-step cost: at each iteration, a random subset of $B=200$ observations is drawn, and the kernel matrices and profile objective are evaluated on this subset. 
After the mini-batch optimization converges (or early stopping triggers), the last-layer representer coefficients are refitted using the full training set by solving the closed-form expression in~\eqref{eq:profile}.
Since the profile objective concentrates out the last layer's coefficients, the number of parameters updated by the batch gradient descent is reduced by $n\times D_{L-1}$ relative to joint optimization, yielding faster convergence and lower memory usage.

\subsection{Optimization}
The lower-layer coefficients $\{\mathbf{C}^{(l)}\}_{l=1}^{L-1}$ are optimized by minimizing the profile objective~\eqref{eq:profile} via the Adam optimizer with learning rate $0.005$, mini-batch size $200$, and a maximum of $500$ gradient steps. 

Early stopping is employed Early stopping is used by reserving $20\%$ of the training data as a validation set.
Training stops if the validation loss does not improve by at least $10^{-5}$ for $50$ consecutive steps, and the model snapshot with the lowest validation RMSE is retained.



\subsection{Regularization hyperparameter selection}\label{sec:hyperparams}

The profile objective involves layer-specific penalties $\lambda^{(1)},\dots,\lambda^{(L)}$. In practice, we group these into two parts: a common penalty $\lambda^{(1)} = \cdots = \lambda^{(L-1)} = \lambda_{lower}$ for the feature-learning layers and a separate penalty $\lambda^{(L)}$ for the last layer. These play distinct roles: $\lambda_{\mathrm{lower}}$ controls the smoothness and complexity of the learned feature embedding, while $\lambda^{(l)}$ governs the bias--variance tradeoff in the final regression layer.

The lower-layer penalty is fixed at $\lambda_{\mathrm{lower}} = n_{train}^{-4/5} \times \#\text{links}$, where $\#\text{links} = \sum_{l=1}^{L} D_{l-1} D_l$ is the total number of univariate link functions in the network. This scaling follows from the minimax-optimal rate in Theorem~\ref{thm:rate}. 
In practice, we scale $\lambda_{\mathrm{lower}}$ by the sum of link functions rather than the product of widths, because we tend to overparameterize the network to obtain stable optimization and predictive performance. 
An underlying intuition is that the true target function, when expressed in the Wahkon class, typically requires a high intrinsic dimension at only one or a few layers, while the remaining layers have effectively low complexity. The sum-based scaling is more suitable than a worst-case product bound in practice.
The last-layer penalty $\lambda^{(L)}$ is selected via Bayesian optimization with $5$-fold cross-validation, as described in Section~\ref{sec:hyperparams}.


The last-layer penalty $\lambda^{(l)}$ is selected via Bayesian optimization \citep{snoek2012practical} (BO) with $5$-fold cross-validated RMSE as the evaluation criterion. We use a Mat\'ern-$5/2$ GP surrogate with Expected Improvement acquisition, $15$ total evaluations ($5$ random initial points followed by $10$ BO iterations), and search over $\lambda^{(l)} \in [0.01\,s,\, 3.0\,s]$ where $s = n^{-4/5} \times \#\text{links}$. Each BO evaluation trains the full network from scratch with the candidate $\lambda^{(l)}$ and fixed $\lambda_{\mathrm{lower}}$, ensuring that the lower-layer features are consistent with the regularization being evaluated.

\section{Simulation and Experiment Details}\label{supp:sim}

This section provides the data-generating models, network architectures, and training configurations used in Sections~\ref{sec:simulation} and~\ref{sec:cite}. All methods share the same configuration; the single-cell CITE-seq application differs only in the network widths.

\subsection{Benchmark functions}

Inputs are i.i.d.\ sampled from $\mathrm{Unif}[-1,1]^D$ and responses are $y_i=f_k(\bx_i)+0.1\,\epsilon_i$ with $\epsilon_i\sim\mathcal{N}(0,1)$. The four benchmark functions are:
\begin{align*}
\text{($f_1$) }& f_1(\bx) = \log(x_1^2\!+\!x_2^2\!+\!|\tan x_3|) + \cot\!\bigl(\tfrac{\pi}{1+e^{x_1^2+\sin 6x_2+x_3^2}}\bigr), & D=3\\[3pt]
\text{($f_2$) }& f_2(\bx) = \sin\!\bigl(\textstyle\sum_{i=1}^{10} x_i^2\bigr), & D=10\\[3pt]
\text{($f_3$) }& f_3(\bx) = \exp\!\bigl\{\tfrac12[\sin(\pi(x_1^2\!+\!x_2^2))+\sin(\pi(x_3^2\!+\!x_4^2))]\bigr\}, & D=4\\[3pt]
\text{($f_4$) }& f_4(\bx) = \exp\!\bigl(\sin(\pi(x_1^2\!+\!x_2^2))\bigr)\cos(\pi x_3 x_4), & D=6
\end{align*}
where $f_4$ has two inactive variables ($x_5$, $x_6$ are noise).

\[
\begin{array}{c|c|c}
\toprule
\text{Function} & \text{Input dim.} & \text{Wahkon/KAN width}\\
\midrule
f_1\text{ (nonlinear)} & 3 & [3\to 6 \to 6 \to 1]\\
f_2\text{ (radial osc.)} & 10 & [10\to 10 \to 10 \to 1]\\
f_3\text{ (compositional)} & 4 & [4\to 4 \to 4 \to 1]\\
f_4\text{ (nested comp.)} & 6 & [6\to 6 \to 6 \to 6 \to 1] \\
\bottomrule
\end{array}
\]

\subsection{Baselines configuration}

\paragraph{KAN.} Uses the original implementation of \citet{liu2024kan} with B-spline basis, grid size $G=9$ (matching Wahkon), LBFGS optimizer (learning rate $1.0$ by default), $500$ training steps, sparsity penalty tuned via Bayesian optimization (matching Wahkon), and mini-batch training. The network width (neuron counts) matches Wahkon.

\paragraph{MLP.} A standard feedforward network with ReLU activations. The MLP maintains the same depth (number of hidden layers) as the Wahkon/KAN architecture, with each hidden layer's neuron count scaled by $\sqrt{G}=3$ to approximately match the parameter count of a KAN with grid size~$G=9$. Trained with Adam (learning rate $0.005$), mini-batch size $200$, and early stopping (patience $50$).

\paragraph{NTK.} The infinite-width neural tangent kernel \citep{jacot2018neural} is computed using the \texttt{neural-tangents} library \citep{novak2020neural} with a depth-matched architecture and Erf activation. The noise variance $\sigma^2$ for the NNGP posterior is selected by Bayesian optimization.

\section{Profile vs.\ Direct Objective: Empirical Comparison}\label{app:profile-empirical}

\begin{figure}[t]
  \centering
  \includegraphics[width=0.95\textwidth]{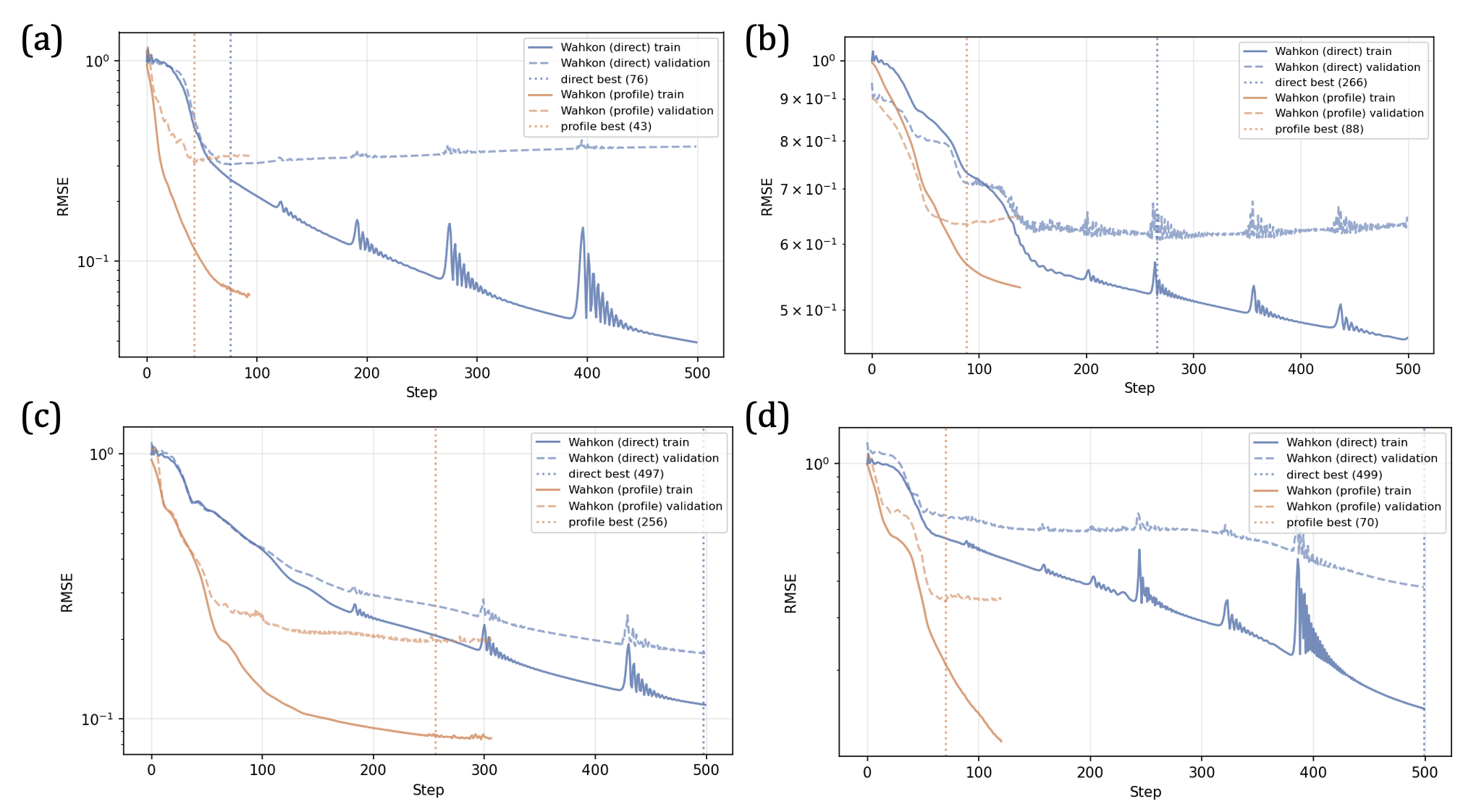}
  \caption{Convergence comparison of the profile objective versus the joint objective on benchmark functions $f_1$--$f_4$. Each panel shows training RMSE (left axis) and test RMSE (right axis) as a function of training steps. The dashed vertical line marks the early-stopping snapshot selected by validation performance. The profile objective consistently reaches lower test RMSE in fewer iterations across all four settings.}
  \label{fig:profile_vs_direct}
\end{figure}

This section presents empirical evidence comparing the profile objective~\eqref{eq:profile} against the direct optimization in \eqref{eq:loss} in which all representer coefficients, including the last layer, are updated by gradient descent. We use the same network architecture and hyperparameters for both objectives on each benchmark function, differing only in how the last-layer coefficients are handled.

As shown in Fig.~\ref{fig:profile_vs_direct}, the profile objective shows faster convergence across all four functions, typically reaching comparable testing loss in $2$--$3\times$ fewer steps.
Notably, the training curves of the direct objective exhibit periodic oscillatory behavior, visible as recurring spikes in both training and test RMSE throughout optimization. We attribute this to the coupling between the last-layer coefficients and the lower-layer kernel matrices in the gradient descent. When updating the lower-layer parameters, the kernel evaluations at the inducing grid points shift, which in turn changes the optimal last-layer solution. Since the direct objective updates both simultaneously by gradient descent, perturbations in the lower layers displace the current last-layer coefficients from their conditional optimum, causing transient increases in loss until the last-layer's coefficients converge. 
The profile objective avoids this pattern entirely as the last-layer coefficients are re-solved analytically at each step.
Furthermore, the profile objective leads to a slightly lower error at the early-stopping point selected by validation performance. 
This improvement may be caused by two factors: (i)~the closed-form last-layer solution eliminates optimization noise in the top layer, providing a cleaner gradient signal to the lower layers; and (ii)~the reduced parameter count (by $n \times D_{L-1}$ representer coefficients) yields a better-conditioned optimization landscape that is less likely to overfit.

\section{Proofs of Main Results}\label{app:proofs}

Throughout, $\mathcal H_{\mathcal K}$ denotes the RKHS with reproducing kernel
$\mathcal K$. For a layer-$l$ link function, we use the notation
$\phi^{(l)}_{jk}$, where $j=1,\ldots,D_l$ indexes the output coordinate and
$k=1,\ldots,D_{l-1}$ indexes the input coordinate from the previous layer.

\subsection{Proof of Theorem~\ref{thm:deep-representer}}
\label{proof:representer}

\begin{proof}
The proof follows the standard RKHS representer argument, applied recursively
through the layers.

For the first layer, fix an edge $(k\to j)$ and consider
$\phi^{(1)}_{jk}\in\mathcal H_{\mathcal K}$. Let
\[
\mathcal S^{(1)}_{k}
=
\mathrm{span}\{\mathcal K(x^{(0)}_{1k},\cdot),\ldots,
\mathcal K(x^{(0)}_{nk},\cdot)\}.
\]
Every $\phi^{(1)}_{jk}$ can be decomposed uniquely as
\[
\phi^{(1)}_{jk}
=
\sum_{i=1}^{n}c^{(1)}_{ijk}\mathcal K(x^{(0)}_{ik},\cdot)
+
\rho^{(1)}_{jk},
\qquad
\rho^{(1)}_{jk}\perp \mathcal S^{(1)}_{k}.
\]
By the reproducing property,
\[
\rho^{(1)}_{jk}(x^{(0)}_{ik})
=
\left\langle
\rho^{(1)}_{jk},\mathcal K(x^{(0)}_{ik},\cdot)
\right\rangle_{\mathcal H_{\mathcal K}}
=0,
\qquad i=1,\ldots,n.
\]
Thus $\rho^{(1)}_{jk}$ does not change any fitted layer-one output on the
training sample, and hence does not change the empirical loss. It only increases
the penalty, since
\[
\|\phi^{(1)}_{jk}\|_{\mathcal H_{\mathcal K}}^2
=
\left\|
\sum_{i=1}^{n}c^{(1)}_{ijk}\mathcal K(x^{(0)}_{ik},\cdot)
\right\|_{\mathcal H_{\mathcal K}}^2
+
\|\rho^{(1)}_{jk}\|_{\mathcal H_{\mathcal K}}^2.
\]
Therefore any minimizer must have $\rho^{(1)}_{jk}=0$.

Now suppose that the representer form holds for all layers before $l$.
Then the layer-$(l-1)$ training outputs
$\{x^{(l-1)}_{ik}:i=1,\ldots,n\}$ are fixed once the lower-layer coefficients
are fixed. Applying the same orthogonal decomposition argument to
$\phi^{(l)}_{jk}$ with respect to
\[
\mathcal S^{(l)}_{k}
=
\mathrm{span}\{\mathcal K(x^{(l-1)}_{1k},\cdot),\ldots,
\mathcal K(x^{(l-1)}_{nk},\cdot)\}
\]
shows that any component orthogonal to $\mathcal S^{(l)}_{k}$ does not change
the fitted values on the training sample and only increases the RKHS penalty.
Hence it must vanish at a minimizer. Therefore,
\[
\phi^{(l)}_{jk}(t)
=
\sum_{i=1}^{n}
c^{(l)}_{ijk}\mathcal K(x^{(l-1)}_{ik},t),
\]
for all $j,k,l$. This completes the induction.
\end{proof}

\subsection{Finite-dimensional form}
\label{proof:finite}

\begin{proof}
By Theorem~\ref{thm:deep-representer},
\[
\phi^{(l)}_{jk}(t)
=
\sum_{i=1}^{n}
c^{(l)}_{ijk}\mathcal K(x^{(l-1)}_{ik},t).
\]
Define the kernel matrix
\[
\mathbf Q^{(l-1)}_{k}
=
\left[
\mathcal K(x^{(l-1)}_{ik},x^{(l-1)}_{i'k})
\right]_{i,i'=1}^{n}
\]
and the coefficient vector
\[
\mathbf c^{(l)}_{\cdot jk}
=
(c^{(l)}_{1jk},\ldots,c^{(l)}_{njk})^\top .
\]
Using the reproducing property,
\[
\begin{aligned}
\|\phi^{(l)}_{jk}\|_{\mathcal H_{\mathcal K}}^2
&=
\left\langle
\sum_{i=1}^{n}c^{(l)}_{ijk}\mathcal K(x^{(l-1)}_{ik},\cdot),
\sum_{i'=1}^{n}c^{(l)}_{i'jk}\mathcal K(x^{(l-1)}_{i'k},\cdot)
\right\rangle_{\mathcal H_{\mathcal K}}  \\
&=
\sum_{i=1}^{n}\sum_{i'=1}^{n}
c^{(l)}_{ijk}c^{(l)}_{i'jk}
\mathcal K(x^{(l-1)}_{ik},x^{(l-1)}_{i'k})  \\
&=
\mathbf c^{(l)\top}_{\cdot jk}
\mathbf Q^{(l-1)}_{k}
\mathbf c^{(l)}_{\cdot jk}.
\end{aligned}
\]
Substituting this identity into the RKHS penalty gives the finite-dimensional
criterion in \eqref{eq:pls_C}. The penalty is quadratic in the coefficients for
fixed layer inputs, while the empirical loss remains generally nonlinear because
the layer outputs are obtained recursively through the network.
\end{proof}

\subsection{Proof of Proposition~\ref{prop:prior_layers}}
\label{proof:prior}

\begin{proof}
Under the prior
\[
\phi^{(l)}_{jk}\sim \mathcal{GP}(0,\tau_l\mathcal K),
\]
and conditional on the layer-$(l-1)$ outputs, the vector of link evaluations
\[
\mathbf z^{(l)}_{jk}
=
\left(
\phi^{(l)}_{jk}(x^{(l-1)}_{1k}),\ldots,
\phi^{(l)}_{jk}(x^{(l-1)}_{nk})
\right)^\top
\]
is multivariate normal:
\[
\mathbf z^{(l)}_{jk}\mid \mathbf X^{(l-1)}
\sim
\mathcal N\left(
\mathbf 0,\tau_l\mathbf Q^{(l-1)}_k
\right).
\]
The vectors $\mathbf z^{(l)}_{jk}$ are conditionally independent across $k$ for
fixed $j$. Since
\[
\mathbf X^{(l)}_{\cdot j}
=
\sum_{k=1}^{D_{l-1}}\mathbf z^{(l)}_{jk},
\]
the sum of independent Gaussian vectors is Gaussian, yielding
\[
\mathbf X^{(l)}_{\cdot j}\mid \mathbf X^{(l-1)}
\sim
\mathcal N\left(
\mathbf 0,\tau_l\sum_{k=1}^{D_{l-1}}\mathbf Q^{(l-1)}_k
\right).
\]
If
$\mathcal K(t,t)=1$, then the $i$th diagonal entry of
$\sum_{k=1}^{D_{l-1}}\mathbf Q^{(l-1)}_k$ equals $D_{l-1}$, and hence
\[
\Var(x^{(l)}_{ij}\mid \mathbf X^{(l-1)})
=
\tau_lD_{l-1},
\qquad
E(x^{(l)}_{ij}\mid \mathbf X^{(l-1)})
=
0.
\]
Therefore, $\mathbb E\!\left[x^{(l)}_{ij}\right]=E\!\left[ E(x^{(l)}_{ij}\mid \mathbf X^{(l-1)}) \right]=0
$,
$\mathrm{Var}\left[x^{(l)}_{ij}\right]=
\mathrm{Var}\!\left[E(x^{(l)}_{ij}\mid \mathbf X^{(l-1)})\right]
+ E\!\left[\mathrm{Var}(x^{(l)}_{ij}\mid \mathbf X^{(l-1)})\right]  =
\tau_l D_{l-1}.$
\end{proof}

\subsection{Proof of Theorem~\ref{thm:MAP}}
\label{proof:map}

\begin{proof}
Assume
\[
y_i=\eta_{\text{Wahkon}}(\mathbf x_i)+\epsilon_i,
\qquad
\epsilon_i\overset{\mathrm{iid}}{\sim}\mathcal N(0,\sigma^2).
\]
The likelihood is proportional to
\[
\exp\left\{
-\frac{1}{2\sigma^2}
\sum_{i=1}^{n}
\bigl[y_i-\eta_{\text{Wahkon}}(\mathbf x_i)\bigr]^2
\right\}.
\]
The independent GP priors on the link functions contribute the function-space
penalty
\[
\exp\left\{
-\frac12
\sum_{l=1}^{L}\frac{1}{\tau_l}
\sum_{j=1}^{D_l}\sum_{k=1}^{D_{l-1}}
\|\phi^{(l)}_{jk}\|_{\mathcal H_{\mathcal K}}^2
\right\}.
\]
Therefore, up to an additive constant, the negative log-posterior is
\[
\frac{1}{2\sigma^2}
\sum_{i=1}^{n}
\bigl[y_i-\eta_{\text{Wahkon}}(\mathbf x_i)\bigr]^2
+
\frac12
\sum_{l=1}^{L}\frac{1}{\tau_l}
\sum_{j=1}^{D_l}\sum_{k=1}^{D_{l-1}}
\|\phi^{(l)}_{jk}\|_{\mathcal H_{\mathcal K}}^2 .
\]
Multiplying by $2\sigma^2$ gives
\[
\sum_{i=1}^{n}
\bigl[y_i-\eta_{\text{Wahkon}}(\mathbf x_i)\bigr]^2
+
\sum_{l=1}^{L}
\frac{\sigma^2}{\tau_l}
\sum_{j=1}^{D_l}\sum_{k=1}^{D_{l-1}}
\|\phi^{(l)}_{jk}\|_{\mathcal H_{\mathcal K}}^2 .
\]
Thus this objective is identical to
\eqref{eq:loss}--\eqref{eq:penalty} when
\[
\frac{\sigma^2}{\tau_l}=n\lambda_l,
\qquad\text{equivalently}\qquad
\tau_l=\frac{\sigma^2}{n\lambda_l}.
\]
The MAP estimator therefore coincides with the penalized least-squares estimator.
The finite-dimensional coefficient form follows from
Theorem~\ref{thm:deep-representer}.
\end{proof}

\subsection{Preliminary Material for the Proof of  Theorem \ref{thm:rate}}\label{sec:Pre_proof_thm}
\paragraph*{Assumptions.}
We first list the assumptions for deriving the minimax-convergence rate.
\begin{assumption}[Bounded domain]\label{assump:bounded_domain}
The predictor domain $\Omega_X\subset\mathbb R^D$ is bounded; that is,
$\|\mathbf x\|_2\le C_X$ for all $\mathbf x\in\Omega_X$.
\end{assumption}

\begin{assumption}[Smooth link functions]\label{assump:smooth_links}
For constants $C_\phi>0$ and $\beta>0$, each link function satisfies
\[
\|\phi^{(l)}_{jk}\|_{\mathcal H_{\mathcal K}}
\le
C_\phi D_l^{-\beta},
\qquad
j=1,\ldots,D_l,\quad k=1,\ldots,D_{l-1}.
\]
The link functions are also uniformly Lipschitz on the relevant compact domain.
\end{assumption}

\begin{assumption}[RKHS entropy]\label{assump:entropy_rkhs}
For each $C>0$, the RKHS ball
$\{\phi:\|\phi\|_{\mathcal H_{\mathcal K}}\le C\}$ satisfies
\[
\log N\!\left(
\delta,
\{\phi:\|\phi\|_{\mathcal H_{\mathcal K}}\le C\},
\|\cdot\|_\infty
\right)
\le
C_2 \delta^{-1/\alpha},
\qquad \delta>0,
\]
for constants $C_2>0$ and $\alpha>0$.
\end{assumption}
The regularity conditions outlined in our analysis are mild and broadly applicable. 
Assumption \ref{assump:bounded_domain}, which requires predictors to reside within a bounded domain, is a standard assumption in both nonparametric regression and neural network theory \citep{gu:13,horowitz2007rate,bauer2019deep}.

Assumption \ref{assump:smooth_links} imposes constraints on the Lipschitz continuity and RKHS norm of the activation functions to balance network expressivity with complexity control.
 Specifically, the RKHS norm bound \( \|\phi^{(l)}_{d_l,d_{l-1}}\|_{\mathcal{H}_\mathcal{K}} \leq C_\phi D_l^{-\beta} \)  enforces a decay rate on the norm of activation functions with respect to the number of neurons in the next layer. 

Assumption \ref{assump:entropy_rkhs} imposes a constraint on the complexity of the function class with the finite norm in RKHS $\cH_{\cK}$. 
The value of $\alpha $is determined by the kernel $\cK$, and a larger $\alpha$ corresponds to a more constrained function class. For example, for a Sobolev space, $\alpha$ is its order \citep{edmunds1996function}.

Under these assumptions, the size of the Wahkon function class can be controlled
by an entropy bound. Let
\[
A_D
=
\left(\prod_{l=0}^{L}D_l\right)^{1+\frac{1}{\alpha}-\frac{\beta}{\alpha}} .
\]
The factor $A_D$ summarizes the effect of the architecture on statistical
complexity. Width and depth increase the number of links, while the decay
$D_l^{-\beta}$ in the RKHS norm moderates this growth.

\begin{theorem}[Entropy bound]\label{thm:entropy}
Under Assumptions~\ref{assump:bounded_domain}--\ref{assump:entropy_rkhs},
\[
\log N\!\left(
\delta,
\mathcal F_{\text{Wahkon}},
\|\cdot\|_2
\right)
\le
C A_D \delta^{-1/\alpha},
\]
where $C>0$ is independent of $n$ and $\delta$.
\end{theorem}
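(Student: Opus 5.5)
We bound the covering number of the composed class by covering every link function separately, chaining the layerwise errors forward through the network with the Lipschitz property, and then counting the product of the individual covers.

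\emph{Step 1: cover each link.} Fix a layer $l$ and an edge $(k\to j)$. By Assumption~\ref{assump:smooth_links}, $\phi^{(l)}_{jk}$ lies in the RKHS ball of radius $C_l:=C_\phi D_l^{-\beta}$. Rescaling $\phi\mapsto\phi/C_l$ maps this to the unit ball, so Assumption~\ref{assump:entropy_rkhs} applied with $C=1$ gives
\[
\log N\!\left(\delta_l,\{\|\phi\|_{\mathcal H_{\mathcal K}}\le C_l\},\|\cdot\|_\infty\right)\le C_2\,(C_l/\delta_l)^{1/\alpha}.
\]
This scaling step is where the factor $D_l^{-\beta/\alpha}$ enters.

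\emph{Step 2: propagate the perturbations.} Let $\tilde\phi^{(l)}_{jk}$ be sup-norm $\delta_l$-approximants of the links, and let $\tilde{\mathbf x}^{(l)}$ be the corresponding forward pass. Fix an input $\mathbf x\in\Omega_X$, which is bounded by Assumption~\ref{assump:bounded_domain}. For each output coordinate $j$ of layer $l$,
\[
|x^{(l)}_j-\tilde x^{(l)}_j|\le\sum_k\Bigl(|\phi^{(l)}_{jk}(x^{(l-1)}_k)-\phi^{(l)}_{jk}(\tilde x^{(l-1)}_k)|+|\phi^{(l)}_{jk}-\tilde\phi^{(l)}_{jk}|_\infty\Bigr)\le \mathrm{Lip}\,\|x^{(l-1)}-\tilde x^{(l-1)}\|_1+D_{l-1}\delta_l .
\]
Here the uniform Lipschitz constant $\mathrm{Lip}$ must hold on a compact set containing all intermediate outputs. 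The links are uniformly bounded, with $\|\phi\|_\infty\le C_l\sup_t\mathcal K(t,t)^{1/2}$, so every intermediate output lies in a bounded set and this requirement is met. Taking $\ell_1$ norms over $j$ and iterating, the output error is at most a sum over $l$ of $\delta_l$ times a polynomial in the widths, of the form $\prod_{m}D_m\cdot\mathrm{Lip}^{L}$. Since the sup norm dominates the $L_2(P_{\mathbf X})$ norm, choosing $\delta_l\asymp\delta/\prod_{m=0}^{L}D_m$ (absorbing $L$ and $\mathrm{Lip}^{L}$ into constants) makes the final function $\delta$-close in $\|\cdot\|_2$.

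\emph{Step 3: count.} The product of the individual covers is a $\delta$-cover of $\mathcal F_{\text{Wahkon}}$. Summing the entropies over all $D_lD_{l-1}$ links,
\[
\log N\le \sum_{l}D_lD_{l-1}\,C_2\Bigl(C_\phi D_l^{-\beta}\textstyle\prod_m D_m/\delta\Bigr)^{1/\alpha}.
\]
Bound $\sum_l D_lD_{l-1}$ by $L\prod_m D_m$ and bound $D_l^{-\beta/\alpha}$ by $(\prod_mD_m)^{-\beta/\alpha}$. This gives $C(\prod_m D_m)^{1+1/\alpha-\beta/\alpha}\delta^{-1/\alpha}=CA_D\delta^{-1/\alpha}$.

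The main obstacle is this last bookkeeping. Replacing the single factor $D_l^{-\beta/\alpha}$ by the full product $(\prod_m D_m)^{-\beta/\alpha}$ is not a valid upper bound in general, because a negative exponent on a larger quantity is smaller. It holds only if the constant $C$ is permitted to absorb width-dependent factors, or if the claim is read with the convention that the $D_l$ are bounded so that constants may depend on them. The same issue arises with the Lipschitz product $\mathrm{Lip}^{L}$. I would first try to state explicitly that $C$ may depend on $L$, $\mathrm{Lip}$ and $C_\phi$. I would then attempt to carry the $\beta$ decay layer by layer, using the shrinking norms to shrink the propagated error: the decay lowers the effective Lipschitz factors by $D_l^{-\beta}$, which can offset the $D_l$ growth. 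The remaining careful work is arranging these exponents so that they assemble into the stated power of $\prod_mD_m$.
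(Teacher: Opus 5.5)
Your argument follows the same route as the paper's. You cover each link in sup norm via the rescaled RKHS-ball entropy, which is where $D_l^{-\beta/\alpha}$ enters. You cover sums by finer covers of the summands, push errors forward with the uniform Lipschitz bound, and add entropies.

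The paper counts layer by layer: $\delta/D_{l-1}$-covers of the incoming links give $C D_l^{1-\beta/\alpha}D_{l-1}^{1+1/\alpha}\delta^{-1/\alpha}$ per layer map. It then declares that propagation between layers costs only a constant depending on the Lipschitz constants and $L$. Your Step~2 is more careful here: each unit of layer $l+1$ sums $D_l$ perturbed inputs, so the amplification carries width factors. ($\mathrm{Lip}^L$ itself is width-free and harmless.)

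The paper then asserts that ``combining the layer-wise covers'' yields $(\prod_l D_l)^{1+1/\alpha-\beta/\alpha}$. This is essentially the replacement you flag as invalid, and the paper does not justify it either. Nothing supplies the factor $D_{l-1}^{-\beta/\alpha}$ that the input-side term $D_{l-1}^{1+1/\alpha}$ would need, because Assumption~\ref{assump:smooth_links} decays only in the output width $D_l$. For $L=1$ the asserted step reads $D_0^{1+1/\alpha}\le C D_0^{1+1/\alpha-\beta/\alpha}$. So the gap you name is genuine, and the paper's proof contains no idea that closes it.

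Your proposed repair cannot close it either. For smooth kernels such as the Gaussian, $\mathrm{Lip}(\phi^{(l)}_{jk})\le c\,\|\phi^{(l)}_{jk}\|_{\mathcal H_{\mathcal K}}\le c\,C_\phi D_l^{-\beta}$. This does shrink the amplification, but only by negative powers of $D_m$ with $m\ge l$, so the first-layer count keeps $D_0^{1+1/\alpha}$ with no decay at all.

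In fact the intended bound, with $C$ free of the widths, is false. Take $L=1$. Then $D_1=1$, the norm constraint is $C_\phi$ whatever $D_0$ is, and $\mathcal F_{\text{Wahkon}}$ is an additive class in $D_0$ coordinates. Under a product design, $\|\sum_k g_k(X_k)\|_2^2\ge\sum_k\mathrm{Var}\,g_k(X_k)$. Hence products of per-coordinate packings (in the variance seminorm) are packings of $\mathcal F_{\text{Wahkon}}$. At a fixed small $\delta$ the entropy therefore grows at least linearly in $D_0$, whereas $A_D=D_0^{1+1/\alpha-\beta/\alpha}$ grows sublinearly once $\beta>1$.

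What your Steps~1--3 do establish is the bound you had just before the invalid replacement,
\[
\log N\bigl(\delta,\mathcal F_{\text{Wahkon}},\|\cdot\|_2\bigr)\le C\Bigl(\prod_{m=0}^{L}D_m\Bigr)^{1/\alpha}\sum_{l=1}^{L}D_l^{1-\beta/\alpha}D_{l-1}\,\delta^{-1/\alpha},
\]
with $C$ depending only on $L$, $\alpha$, $C_\phi$, $C_2$ and the Lipschitz constant. The theorem literally asks only that $C$ be independent of $n$ and $\delta$, so the stated inequality follows by absorbing the ratio to $A_D$ into $C$. In that reading, however, $A_D$ carries no information and cannot support the width/depth trade-off discussed after Theorem~\ref{thm:rate}.
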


The entropy bound leads to the following convergence result.
\begin{proof}[Proof of Theorem~\ref{thm:entropy}]
Let
\[
A_D
=
\left(\prod_{l=0}^{L}D_l\right)^{
1+\frac{1}{\alpha}-\frac{\beta}{\alpha}}.
\]
We show that the covering number of the Wahkon class is bounded by
$C A_D\delta^{-1/\alpha}$.

For a single univariate link at layer $l$, Assumption~\ref{assump:entropy_rkhs}
and Assumption~\ref{assump:smooth_links} imply
\[
\log N\left(
\epsilon,
\{\phi:\|\phi\|_{\mathcal H_{\mathcal K}}\le C_\phi D_l^{-\beta}\},
\|\cdot\|_\infty
\right)
\le
C D_l^{-\beta/\alpha}\epsilon^{-1/\alpha}.
\]
Now consider one output coordinate of layer $l$,
\[
x^{(l)}_j
=
\sum_{k=1}^{D_{l-1}}
\phi^{(l)}_{jk}(x^{(l-1)}_k).
\]
A $\delta$-cover for the sum can be constructed by using
$\delta/D_{l-1}$-covers for each incoming link. Hence, for one output coordinate,
the log-covering number is bounded by
\[
C D_{l-1}
D_l^{-\beta/\alpha}
\left(\frac{\delta}{D_{l-1}}\right)^{-1/\alpha}
=
C D_l^{-\beta/\alpha}
D_{l-1}^{1+1/\alpha}
\delta^{-1/\alpha}.
\]
Since layer $l$ has $D_l$ output coordinates, the log-covering number for the
layer map is bounded by
\[
C D_l^{1-\beta/\alpha}
D_{l-1}^{1+1/\alpha}
\delta^{-1/\alpha}.
\]

The full Wahkon class is obtained by composing these layer maps. By the
uniform Lipschitz condition in Assumption~\ref{assump:smooth_links}, perturbations
at an intermediate layer propagate to the output by at most a constant factor
depending on the Lipschitz constants and the fixed depth $L$. These constants can
be absorbed into $C$. Combining the layer-wise covers over
$l=1,\ldots,L$ gives
\[
\log N\left(
\delta,
\mathcal F_{\text{Wahkon}},
\|\cdot\|_2
\right)
\le
C
\left(\prod_{l=0}^{L}D_l\right)^{
1+\frac{1}{\alpha}-\frac{\beta}{\alpha}}
\delta^{-1/\alpha}.
\]
This proves the stated entropy bound.
\end{proof}

\subsection{Proof of Theorem~\ref{thm:rate}}
\label{proof:rate}

\begin{proof}
Let $\eta^*$ be the best approximation to $\eta$ in
$\mathcal F_{\text{Wahkon}}$, and define
\[
\Delta_{\mathrm{approx}}^2
=
\|\eta^*-\eta\|_{L^2(P_{\mathbf X})}^2.
\]
By the triangle inequality,
\[
\|\widehat\eta_{\text{Wahkon}}-\eta\|_{L^2(P_{\mathbf X})}^2
\le
2\|\widehat\eta_{\text{Wahkon}}-\eta^*\|_{L^2(P_{\mathbf X})}^2
+
2\Delta_{\mathrm{approx}}^2.
\]
It remains to control the estimation error
$\|\widehat\eta_{\text{Wahkon}}-\eta^*\|_{L^2(P_{\mathbf X})}^2$.

The penalized least-squares basic inequality gives
\[
\frac1n
\sum_{i=1}^{n}
\bigl[
\widehat\eta_{\text{Wahkon}}(\mathbf x_i)-\eta^*(\mathbf x_i)
\bigr]^2
+
\lambda_n J(\widehat\eta_{\text{Wahkon}})
\le
\text{empirical process term}
+
\lambda_n J(\eta^*) .
\]
Under the assumed boundedness and entropy conditions, standard penalized
empirical-process theory for least-squares estimators
\citep[see, e.g.,][]{geer2000empirical} implies
\[
\|\widehat\eta_{\text{Wahkon}}-\eta^*\|_{L^2(P_{\mathbf X})}^2
=
O_p\left(
\lambda_n J(\eta^*)
+
A_D n^{-1}\lambda_n^{-1/(2\alpha)}
\right),
\]
where
\[
A_D
=
\left(\prod_{l=0}^{L}D_l\right)^{
1+\frac{1}{\alpha}-\frac{\beta}{\alpha}} .
\]
If the oracle $\eta^*$ has bounded penalty, $J(\eta^*)=O(1)$, the first term is
$O(\lambda_n)$. Combining this bound with the approximation term yields
\[
\|\widehat\eta_{\text{Wahkon}}-\eta\|_{L^2(P_{\mathbf X})}^2
=
O_p\left(
\Delta_{\mathrm{approx}}^2
+
\lambda_n
+
A_D n^{-1}\lambda_n^{-1/(2\alpha)}
\right),
\]
which proves the theorem.

Balancing the last two terms gives
\[
\lambda_n
\asymp
\left(\frac{A_D}{n}\right)^{\frac{2\alpha}{2\alpha+1}},
\]
and substituting this choice gives the stated optimized rate.
\end{proof}

\end{document}